\documentclass[12pt]{article}

\usepackage[letterpaper,
            top=1.05in,
            bottom=1.05in,
            left=1.10in,
            right=1.10in]{geometry}

\usepackage{setspace}
\usepackage{amsthm}
\usepackage{newtxtext,newtxmath}

\usepackage{amsmath,amsthm,%amssymb,
mathtools,bm}

\usepackage{mathrsfs}

\allowdisplaybreaks

\usepackage{graphicx}
\usepackage{booktabs}
\usepackage{caption}
\usepackage{tikz}
\usetikzlibrary{
    arrows.meta,
    decorations.pathreplacing,
    positioning,
    calc
}

\usepackage{enumitem}
\setlist[itemize]{topsep=3pt,itemsep=2pt,parsep=0pt}
\setlist[enumerate]{topsep=3pt,itemsep=2pt,parsep=0pt}

\usepackage{titlesec}

\titleformat{\section}
  {\large\bfseries}
  {\thesection.}{0.65em}{}

\titleformat{\subsection}
  {\normalsize\bfseries}
  {\thesubsection.}{0.65em}{}

\titleformat{\subsubsection}
  {\normalsize\itshape}
  {\thesubsubsection.}{0.65em}{}

\titlespacing*{\section}
  {0pt}{2.2ex plus .5ex minus .2ex}{1.0ex}

\titlespacing*{\subsection}
  {0pt}{1.8ex plus .4ex minus .2ex}{0.7ex}

\titlespacing*{\subsubsection}
  {0pt}{1.5ex plus .3ex minus .2ex}{0.5ex}

\newtheorem{thm}{Theorem}
\newtheorem{prop}{Proposition}
\newtheorem{lem}{Lemma}
\newtheorem{cro}{Corollary}

\theoremstyle{definition}

\newtheorem*{exam*}{Example}

\theoremstyle{remark}
\newtheorem{rmk}{Remark}

\usepackage[round,authoryear]{natbib}
\usepackage{xcolor}
\definecolor{linkblue}{RGB}{25,70,120}

\usepackage{hyperref}
\hypersetup{
    colorlinks=true,
    citecolor=linkblue,
    linkcolor=linkblue,
    urlcolor=linkblue,
    bookmarksopen=true,
    pdfstartview={XYZ null null 1.00}
}

\usepackage[nameinlink,capitalize,noabbrev]{cleveref}

\crefname{thm}{theorem}{theorems}
\Crefname{thm}{Theorem}{Theorems}
\crefname{prop}{proposition}{propositions}
\Crefname{prop}{Proposition}{Propositions}
\crefname{lem}{lemma}{lemmas}
\Crefname{lem}{Lemma}{Lemmas}
\crefname{cro}{corollary}{corollaries}
\Crefname{cro}{Corollary}{Corollaries}
\crefname{defn}{definition}{definitions}
\Crefname{defn}{Definition}{Definitions}
\crefname{assum}{assumption}{assumptions}
\Crefname{assum}{Assumption}{Assumptions}
\crefname{rmk}{remark}{remarks}
\Crefname{rmk}{Remark}{Remarks}

\newcommand{\E}{\mathbb{E}}

\newcommand{\R}{\mathbb{R}}

\newcommand{\cA}{\mathcal{A}}

\newcommand{\cD}{\mathcal{D}}
\newcommand{\cF}{\mathcal{F}}

\newcommand{\cM}{\mathcal{M}}

\newcommand{\dd}{\,d}

\DeclareMathOperator{\Var}{Var}

\title{\Large\bfseries Delegation without Priors\thanks{
Wei He: Department of Economics, The Chinese University of Hong Kong
(email: hewei@cuhk.edu.hk).
Fei Li: Department of Economics, University of North Carolina at Chapel Hill
(email: lifei@email.unc.edu).
Jiangtao Li: Department of Economics, Hong Kong University of Science and Technology
(email: jiangtaoli@ust.hk).
Hanqing Ye: Department of Economics, University of North Carolina at Chapel Hill
(email: hanqing@unc.edu). We thank Sarah Auster, Roberto Corrao, Yingni Guo,  Qingmin Liu, Can Tian, and Dihan Zou for valuable comments and insightful discussions.
}}
\author{Wei He \and Fei Li  \and Jiangtao Li \and Hanqing Ye}                               

\date{September 28, 2026}

\begin{document}

\maketitle

\begin{abstract}
\noindent
A principal relies on information held by a biased agent in decision-making. Without a prior over the circumstances that may arise, she designs a delegation rule to minimize the worst-case regret. Optimal delegation combines a default action, a fixed gap above it, and a higher discretionary interval with random boundaries. This structure separates two margins: whether the decision departs from the default and how much discretion the agent receives conditional on doing so. The default and gap limit excessive adjustment when little adaptation is warranted, while the higher interval preserves flexibility when larger adjustments are needed. Randomizing the discretionary interval's floor and ceiling hedges the principal's exposure across potential worst-case states. This rule is optimal among all stochastic incentive-compatible mechanisms.
\end{abstract}

\vspace{0.5em}

\noindent
\textbf{Keywords:} delegation, robust mechanism design, ambiguity.

\medskip
\noindent
\textbf{JEL Classification:} D82, D83.

\bigskip

%\newpage
%\tableofcontents
\newpage

%\pagenumbering{arabic} 
\setcounter{page}{1}

\section{Introduction}\label{sec:introduction}

In many organizations, a principal delegates decision-making authority to a privately informed agent. Delegation puts the agent's information to use, but it gives the agent room to steer the decision toward her own interests when preferences are imperfectly aligned. The central design problem is therefore how much discretion to grant. Restricting the agent's discretion can limit her ability to choose against the principal's interests, but it also makes the decision less responsive to the circumstances at hand. Naturally, the optimal scope of delegation depends on which circumstances the principal expects to face more often. 

Forming such an assessment can be difficult. Consider a firm extending an established line of business into a new market. Headquarters retains authority over investment, while a local manager observes current market conditions and knows better whether further expansion is warranted and how much to invest. The manager's information makes delegation valuable, but an empire-building motive may lead her to favor larger investments than headquarters would prefer.\footnote{Practitioner accounts of capital budgeting explicitly discuss project sponsors' incentives to build larger businesses, which may bring higher compensation and promotion opportunities; see \citet[pp.~185--186]{barshop2016capital}.} Through its prior experience with the manager, headquarters may have a good sense of this incentive conflict. Its experience in the new market, however, is too limited to assess how frequently different market conditions will arise, leaving little guidance for how much discretion to grant. 

We study delegation when the principal lacks a reliable prior over the circumstances that may arise. The principal's decision is modeled as a real number, whose desirability depends on a one-dimensional payoff-relevant state. The state lies in a known interval, but the principal has no prior over it. The agent observes the realized state. We focus on a canonical quadratic-loss environment \citep{holmstrom1980theory, melumad1991communication} in which both parties prefer higher actions as the state increases, while the agent's preferred action is shifted upward relative to the principal's. 

Before the state is realized, the principal chooses how much discretion to grant the agent. We allow her to use random delegation rules, with the agent free to choose an action from the realized delegation set. For each state, we measure the principal's regret relative to the action she would choose if she observed the state, and evaluate each delegation rule by its worst-case regret across states. The principal need not literally expect the worst case to occur. The criterion instead provides a prior-free guarantee: a rule that performs well under it does so regardless of which state arises.

Our main result characterizes the optimal robust delegation rule and shows that it attains the value of the unrestricted stochastic mechanism-design problem. The rule has a simple structure. A common low default action is always available, and a fixed range of actions immediately above it is always excluded. Beyond this gap lies a discretionary interval. All randomization is summarized by two variables: the floor and the ceiling of this interval. Figure \ref{fig:randomized-delegation} illustrates this structure.

The incentive logic exploits the partial alignment of preferences. The default and the gap create a hurdle for departing from the baseline. At low states, the agent would like to move somewhat upward, but the actions above the gap are too high for him to prefer them to the default. As the state rises, his preferred action rises as well, and eventually he is willing to cross the gap and use the higher discretionary interval. The interval then preserves flexibility over how far the decision can adjust. Randomizing its floor and ceiling varies both the hurdle for leaving the default and the scope of discretion once it is crossed, allowing the principal to hedge her exposure across states.

This default-gap-interval structure also offers a broader view of familiar delegation arrangements. Many applications naturally feature a default together with a range of alternatives that becomes available only after a meaningful hurdle is crossed. A manager may maintain a routine investment level, while a major expansion opens access to a higher range of investment scales. In academic hiring, a dean may authorize a department to make a standard salary offer, while an exceptional candidate may justify approval of an offer in a substantially higher range. These examples share a common feature: discretion over a range of actions may be granted only when the case for departing from the default is sufficiently strong. Delegation therefore governs two related margins: the threshold for abandoning the default and the latitude afforded to the agent once that threshold is met, which are captured by the gap and the higher interval in our optimal rule.
\begin{figure}[t]
\centering
\begin{tikzpicture}[x=1cm,y=1cm,>=stealth]

% ============================================================
% Left panel
% ============================================================
\begin{scope}[shift={(0,0)}]

\node[font=\small] at (3.1,1.05) {\footnotesize Delegation Set 1};

% Action space = R
\draw[<->, line width=0.7pt] (0,0) -- (6.2,0);

% Common default
\fill [black!50](0.9,0) circle (2.7pt);

% Discretionary interval
\draw [black!50, line width=4pt, line cap=round]
    (3.25,0) -- (5.15,0);

% Labels
\node[font=\scriptsize, align=center]
    at (0.9,-0.3) {default};

\node[font=\scriptsize]
    at (2.1,-0.30) {gap};

\node[font=\scriptsize]
    at (4.2,-0.30) {discretionary interval};
    % Common gap marker
\draw[black!45, thin] (1.05,0.20) -- (3.00,0.20);
\draw[black!45, thin] (1.05,0.14) -- (1.05,0.26);
\draw[black!45, thin] (3.00,0.14) -- (3.00,0.26);

\end{scope}

% ============================================================
% Right panel
% ============================================================
\begin{scope}[shift={(9.0,0)}]

\node[font=\small] at (3.1,1.05) {\footnotesize Delegation Set 2};

% Action space = R
\draw[<->, line width=0.7pt] (0,0) -- (6.2,0);

% Same common default
\fill [black!50](0.9,0) circle (2.7pt);

% Different discretionary interval
\draw[black!50, line width=4pt, line cap=round]
    (3.85,0) -- (5.70,0);

% Labels
\node[font=\scriptsize, align=center]
    at (0.9,-0.3) {default};

\node[font=\scriptsize]
    at (2.4,-0.30) {gap};

\node[font=\scriptsize]
    at (4.78,-0.30) {discretionary interval};

% Common gap marker
\draw[black!45, thin] (1.05,0.20) -- (3.00,0.20);
\draw[black!45, thin] (1.05,0.14) -- (1.05,0.26);
\draw[black!45, thin] (3.00,0.14) -- (3.00,0.26);

\end{scope}

\end{tikzpicture}

\caption{
Two realized delegation sets of the optimal randomized delegation rule.}
\label{fig:randomized-delegation}
\end{figure}

To establish optimality, we first solve the unrestricted mechanism-design problem and then show that our construction implements the optimal robust mechanism. %Incentive compatibility sharply constrains how a mechanism can respond to the agent's information. 
Responsiveness requires the average action to increase with the reported state. As long as this average remains below the agent's preferred action, higher reports are more attractive to lower types. Incentive compatibility then requires the dispersion of the induced action to increase as well. The only way to avoid this additional dispersion while continuing to raise the average action is to give the agent his preferred action on average. Under worst-case evaluation, doing so leaves the principal fully exposed to the preference conflict in any state, which is undesirable. The optimal robust mechanism therefore uses increasing dispersion whenever it remains responsive to the agent's information.

This logic also shapes where responsiveness begins and ends. At the lowest state, dispersion provides no incentive benefit and only adds to the principal's loss, so the mechanism assigns a deterministic action. As the reported state rises, the average action becomes more responsive, and the dispersion required for incentive compatibility accumulates. Eventually, further responsiveness would require too much additional dispersion. The mechanism then stops increasing the average action, which also allows dispersion to stop rising, and pools the remaining states.

A random delegation rule implements the optimal mechanism. The deterministic outcome at the lowest state generates the common default and gap, while randomizing the boundaries of the higher interval reproduces the responsiveness and dispersion of the unrestricted optimum. Thus, despite the stronger ex post incentive requirement imposed by delegation, the principal loses nothing relative to unrestricted stochastic mechanism design.

The discussion above shows that both randomization and the gap in the delegation set are essential for implementing the optimal robust mechanism. Without randomization, any fixed delegation set induces a deterministic action in every state and therefore generates none of the action dispersion required by the optimal mechanism. The limitation is particularly stark under worst-case evaluation. Once an action is included in the delegation set, it is always available to the agent. There may then be a state in which this action is exactly what the biased agent wants to choose while being far from the principal's preferred action, leaving the principal fully exposed to the resulting loss of control. Randomizing the delegation set makes access to such actions probabilistic, generating the dispersion needed for incentives while limiting the principal's exposure to any particular circumstance.

A more distinctive feature of the optimal rule is its disconnected geometry. At the lowest state, the optimal mechanism prescribes the same deterministic action in every realization, which becomes the common default under delegation. Because the lowest-state agent would prefer a range of actions immediately above this default, all of these actions must be excluded from every realized delegation set, creating a common gap. Other states, however, require access to actions above this gap. This is exactly where random interval delegation fails. Any realized interval that contains both the common default and a sufficiently high action must also contain every action between them, thereby filling in the gap. Randomizing over intervals changes which connected set is available, but cannot reproduce the disconnected choice set required by the optimal mechanism.

\paragraph{Related Literature.} There is a literature on optimal delegation under known priors \citep{holmstrom1980theory,melumad1991communication, kovac2009stochastic, %martimort2006continuity,
alonso2008optimal,amador2013theory}. One takeaway from this literature is that suitable restrictions on the principal's prior can make deterministic interval delegation optimal. Our contribution is twofold. When the prior is unknown, our robust solution reveals an additional design margin: partial alignment allows a common default and a gap to govern whether the agent departs from the baseline, while a separate discretionary interval governs how much latitude he receives once he does. Robust delegation therefore separates access to discretion from the scope of discretion. Second, our saddle-point argument endogenously identifies a least-favorable distribution against which the robust rule is optimal. This distribution falls outside the regularity conditions supporting deterministic interval delegation in the relevant Bayesian characterizations, and thus provides a useful benchmark for assessing the role of those prior restrictions in designing optimal delegation.

Non-convex delegation sets also arise in other settings with known priors. In models with hidden information acquisition or effort, excluding intermediate actions can strengthen the agent's incentives to become informed or exert effort \citep{szalay2005economics,ball2024benefiting,gerardi2024delegation}. In \citet{alonso2008optimal}, a gap induces more extreme choices when the agent's preferred action responds less to the state than the principal's. Despite these different foundations, gaps play a similar role: they remove attractive intermediate responses and make behavior more responsive to underlying circumstances. We highlight a complementary role for gaps under a directional upward bias, where the principal's and agent's preferred actions have the same state sensitivity but the agent's is systematically shifted upward. In the robust optimum, protecting the common default at low states requires excluding nearby upward actions. The gap thereby acts as a hurdle that determines when the agent can leave the default and access the higher discretionary region.

Our paper belongs to the growing literature on robust delegation. \citet{frankel2014aligned}, \citet{guo2023regret}, and \citet{alonso2026robust} emphasize forms of internal uncertainty, in which the principal has limited knowledge about the agent's preferences, the degree of misalignment, or the opportunities available to the agent. We instead study external uncertainty about the payoff-relevant environment. The principal knows the support of the state but has no probabilistic information about its distribution, providing a benchmark for delegation when her knowledge of the environment is extremely limited. Closely related work studies uncertainty about the state distribution while retaining its mean and obtains interval delegation, or randomization over intervals, as optimal \citep{hu2023robust,jia2026random}. The contrast shows that restrictions on the state distribution can qualitatively reshape the optimal delegation rule, underscoring the importance of how prior uncertainty is specified.

Finally, our analysis connects to the broader literature on robust mechanism and contract design, which studies incentive design when the designer has limited knowledge of the environment \citep{bergemann2008pricing,bergemann2011robust,carroll2015robustness,carroll2019robustness,kambhampati2023randomization, brooks2024structure,
he2024interim,guo2025robust,li2025robust, durandard2025robust, brooks2025simplicity,zhang2026generalized}. This literature shows how worst-case design can generate sharp mechanisms under uncertainty about demand, available actions, information structures, or equilibrium behavior. Our analysis enriches this literature by studying a delegation environment without transfers, where incentive compatibility creates an endogenous cost of responsiveness: making the decision more responsive to the agent's information requires additional action dispersion. Solving the unrestricted robust mechanism problem reveals how this tradeoff shapes the optimal stochastic allocation and provides the benchmark that guides our characterization of optimal delegation.

\paragraph{Organization.} The rest of the paper is structured as follows. Section 2 sets up the model. Section 3 discusses deterministic delegation and its limitations. Section 4 characterizes optimal robust delegation using a mechanism-design approach. Section 5 concludes. Proofs and details of the construction are relegated to the Appendix.

\section{Model}
\label{sec:model}

A principal must make a decision using information privately held by an agent. Monetary transfers are unavailable. The principal designs a delegation rule that determines the agent's discretion over the final decision.

\paragraph {Environment.} The payoff-relevant state is \(\theta\in\Theta=[0,1]\), privately observed by the agent. The principal chooses an action \(a\in\cA=\mathbb{R}\).\footnote{Restricting the action space to the principal's set of ideal actions, \([0,1]\), is not without loss when stochastic mechanisms are allowed; see, e.g., \citet{kovac2009stochastic}.} We adopt the canonical quadratic-loss specification for the payoffs of the principal and the agent:
$$v(a,\theta)=-(a-\theta)^2    \qquad \mbox{and} \qquad u(a,\theta)=-(a-\theta-b)^2,
$$
respectively. Thus, the principal's ideal action is $\theta$, whereas the agent's ideal action is $\theta+b$, and each player's payoff falls with the distance from the corresponding ideal action. We focus on the case where $b\in(0,1)$, so the two parties' preferences are partially aligned.

\paragraph{Delegation.} 
A \textit{delegation} rule (randomly) assigns an admissible set of actions from which the agent may choose. Formally, let $\cD$ denote the collection of nonempty closed subsets of $\cA$. A (random) delegation rule is a probability distribution $\Psi\in\Delta(\cD)$, and, upon observing the realized delegation set $D$ and the state $\theta$, the agent chooses the action $a_D(\theta)$ closest to his ideal point $\theta+b$:
\[ a_D(\theta) = \min\operatorname*{arg\,min}_{a\in D}(a-\theta - b)^2.
\]
The minimum specifies the tie-breaking rule: when the agent is indifferent between two actions, the one preferred by the principal is selected. A delegation rule is \textit{deterministic} if $\Psi$ is degenerate and places probability one on a single set $D\in\cD$.

\paragraph{Robust Design.} 
The principal knows that the state lies in $\Theta$ but has no further information about its distribution. Thus, her ambiguity set contains all Borel probability distributions on $\Theta$:
\[\cF=\Delta(\Theta).\] 
The principal evaluates a delegation rule by its \emph{worst-case regret}, measured relative to the decision she would make if she observed the state.\footnote{We use the language of regret to emphasize that $(a-\theta)^2$ is the principal's state-by-state payoff loss relative to the complete-information benchmark. Because that benchmark payoff is zero in every state, minimizing worst-case expected regret is equivalent to maximizing worst-case expected utility.} At state $\theta$, her first-best action is $\theta$ and her first-best payoff is zero. Hence the regret from choosing action $a$ is simply
\[
r(a,\theta)=(a-\theta)^2.
\]
For a given delegation set, regret can arise in two ways. If it is too restrictive, it may rule out actions that the agent would willingly choose, and that would move the decision closer to the principal's ideal action.  If it is too permissive, it may give the biased agent enough latitude to choose actions against the principal's interest. Both forms of poor performance amount to forgone opportunities relative to what the principal could achieve with knowledge of the state.
Designing delegation therefore requires balancing the loss from restricting useful adaptation against the loss from granting excessive discretion. The principal's robust delegation problem is
\[
V_D
=
\inf_{\Psi\in\Delta(\cD)}
\sup_{F\in\cF}
\int_{\Theta}\int_{\cD}
\bigl(a_D(\theta)-\theta\bigr)^2
\,d\Psi(D)\,dF(\theta),
\]
Because $\cF$ contains every degenerate distribution, the inner worst case is equivalently the worst state:
\[
\sup_{F\in\cF}
\int_{\Theta}\int_{\cD}\bigl(a_D(\theta)-\theta\bigr)^2
\,d\Psi(D)\,dF(\theta)
=
\sup_{\theta\in\Theta}\int_{\cD}\bigl(a_D(\theta)-\theta\bigr)^2\,d\Psi(D).
\]
The principal therefore chooses the random delegation rule that minimizes the largest statewise regret.

\section{Deterministic Delegation}
\label{sec:deterministic}
We begin with deterministic delegation, which provides a benchmark for understanding robust delegation under distributional uncertainty.
\begin{prop}
\label{prop:detvalue}
Under deterministic delegation, the principal's minimax regret is
$\min\left\{b^2,1/4\right\}$.
In particular:
\begin{itemize}
\item if $b\leq 1/2$, full delegation $D=[0,1]$ is optimal;
\item if $b>1/2$, no delegation, implemented by $D=\{1/2\}$, is optimal.
\end{itemize}
\end{prop}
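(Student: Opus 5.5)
Under a deterministic rule $D$, the regret at state $\theta$ is $(a_D(\theta)-\theta)^2$, so the value of $D$ is $\sup_{\theta\in[0,1]}(a_D(\theta)-\theta)^2$. The proof has two parts. First, we show that $D=[0,1]$ achieves $b^2$ and $D=\{1/2\}$ achieves $1/4$, which gives the upper bound $\min\{b^2,1/4\}$. Second, we show that every closed nonempty $D$ has worst-case regret at least $\min\{b^2,1/4\}$.

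For the upper bounds, take $D=[0,1]$. The agent chooses $\min\{\theta+b,1\}$, so the regret is $\min\{b,1-\theta\}^2\le b^2$, with equality at $\theta=0$. Under $D=\{1/2\}$ the action is always $1/2$, and the regret $(1/2-\theta)^2$ peaks at $1/4$ at $\theta\in\{0,1\}$. Comparing $b^2$ with $1/4$ gives the case split $b\le 1/2$ versus $b>1/2$.

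For the lower bound, fix a closed $D$ and set $c=\min\{b,1/2\}$. Suppose for contradiction that $|a_D(\theta)-\theta|<c$ for every $\theta\in[0,1]$. At $\theta=0$ this requires a point of $D$ in $(-c,c)$. At $\theta=1$ it requires a point of $D$ in $(1-c,1+c)$; since $c\le 1/2$, that point exceeds $1/2$.

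\begin{itemize}
\item \emph{Case $c=1/2$ ($b>1/2$).} The agent at state $0$ has ideal point $b>1/2$. Any point $d\in D\cap(1/2,3/2)$ satisfies $|d-b|<1$. Since $b>1/2$, every point of $(-1/2,1/2)$ is at distance more than $b-1/2$ from $b$. More directly, let $a=a_D(0)$. We need $a<1/2$, so $|a-b|=b-a>b-1/2$, and $a$ must beat $d$, i.e.\ $|d-b|\ge b-a$. A cleaner route is to choose the relevant state more carefully: take $\theta^*=\tfrac12-b$, or use the largest element of $D$ below $1/2$.
\item \emph{General scheme (both cases).} Let $a_0=\max\{D\cap(-\infty,x]\}$ and $a_1=\min\{D\cap[x,\infty)\}$ be neighbouring points of $D$ straddling a suitable threshold. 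Consider the state $\theta$ at which the agent is indifferent between $a_0$ and $a_1$, namely $\theta+b=(a_0+a_1)/2$. By the tie-break the agent picks $a_0$, so the regret is $\bigl((a_1-a_0)/2-b\bigr)^2$. A state just above it yields $a_1$, with regret tending to $\bigl((a_1-a_0)/2+b\bigr)^2$.
\end{itemize}

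When $b\le 1/2$, the key observation is that at $\theta=0$ regret below $b^2$ requires $a_D(0)\in(-b,b)$. Any $D$ that does not put a point exactly at $b$ either leaves $0$'s ideal point uncovered or forces a jump somewhere. Following points of $D$ upward from $a_D(0)$ toward the needed point near $1$, each state $\theta$ must receive an action within $c$ of $\theta$. I would argue by a continuity or "chain" argument that some state is exposed to regret at least $c^2$.

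\begin{itemize}
\item If $D$ contains an interval, some state picks $\theta+b$ exactly, which gives regret $b^2$.
\item Otherwise, consider the jump between consecutive points $a_0<a_1$ around the state $\theta$ where the agent switches. Just after the switch the regret is $\bigl(b+(a_1-a_0)/2\bigr)^2\ge b^2$.
\item The remaining subcase is that the action stays put at $a_0$ while $\theta$ grows. That gives regret $(a_0-\theta)^2$, which grows, and when it reaches $c^2$ we are done.
\end{itemize}

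Formally: define $\theta^*=\inf\{\theta: a_D(\theta)>\theta\}$ over $[0,1]$. At $\theta=0$, if $a_D(0)\ge b$ we are done, so suppose $a_D(0)<b$. Note that $a_D$ is nondecreasing, and whenever $a_D(\theta)>\theta$ the agent prefers $a_D(\theta)$ while $\theta+b$ exceeds it. The case analysis of whether $a_D$ has an upward jump or flat stretch before state $1$ gives the bound.

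The main obstacle is making this lower-bound argument airtight. What must be ruled out is a cleverly spaced discrete $D$ that keeps $|a_D(\theta)-\theta|<c$ for all $\theta$. The key inequality is this: at any point where $a_D$ jumps upward (from $a_0$ to $a_1$ at the indifference state), the post-jump regret is $(b+(a_1-a_0)/2)^2\ge b^2$. If $a_D$ has no jump on $[0,1]$, then $a_D$ is continuous and nondecreasing, so $D$ contains an interval of chosen actions with $a_D(\theta)=\theta+b$, again giving regret $b^2$. The only way to avoid both is for $a_D$ to be constant on $[0,1]$, and a constant action has worst-case regret at least $1/4$. Together these give $\min\{b^2,1/4\}$.
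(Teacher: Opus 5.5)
Your final paragraph is a correct proof and is essentially the paper's argument. A constant $a_D$ costs at least $1/4$, while a nonconstant $a_D$ either jumps upward at a gap midpoint, where regret just above the switch exceeds $b^2$, or, having no jump, is continuous and must select some ideal point $\theta+b\in D$, giving regret $b^2$; this is the paper's split on whether $D$ meets $[b,1+b]$, read in contrapositive. You should discard the earlier exploratory material, in particular the bullet claiming that any interval in $D$ yields a state choosing $\theta+b$ (false for, e.g., $D=[5,6]$), and in the continuous case add one line: if $\theta+b\notin D$, then $a_D$ is locally constant near $\theta$ unless $\theta+b$ is a gap midpoint.
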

Under distributional uncertainty, deterministic delegation leaves little room for a meaningful compromise between responsiveness and control. Any deterministic rule that remains responsive to the agent's information must expose the principal to the agent's full bias somewhere, generating worst-case regret of at least \(b^2\). Further restrictions on the feasible set can only introduce additional distortions across states, so full delegation \(D=[0,1]\) attains the best robust guarantee among responsive deterministic rules. The alternative is to dispense with responsiveness altogether and fix the action in advance. The minimax constant action is \(1/2\), yielding worst-case regret \(1/4\). Robust deterministic delegation is therefore \emph{all-or-nothing}: full delegation or no delegation.

The following example shows how randomization can balance losses across states and strictly improve the robust guarantee.

\begin{exam*}
Suppose
that \(b=1/2\). In this case, the optimal deterministic guarantee is \(1/4\). Consider the random delegation rule 
\[
\Psi=
\begin{cases}
D_1=\{1/4\}, & \text{with probability } 1/3,\\
D_2=[1/2,1], & \text{with probability } 2/3.
\end{cases}
\]
The agent's choice under each set is
\[
a_{D_1}(\theta)=\tfrac14,\quad\text{and}\quad
a_{D_2}(\theta)=\min\left\{\theta+\tfrac12,1\right\}.
\]
The two delegation sets have opposite strengths across states: \(D_1\) limits the agent's upward distortion at low states, but performs poorly as the state becomes high. \(D_2\) has the opposite profile: it exposes the principal to the agent's bias at low states, but preserves the flexibility needed when the state is high.

Randomization hedges these opposing statewise exposures. When \(\theta\leq1/2\), \(D_2\) generates constant regret \(1/4\), whereas \(D_1\) generates regret \((\theta-1/4)^2\leq1/16\). Their mixture therefore has regret at most \(3/16\). When \(\theta>1/2\), the roles are reversed: \(D_2\) becomes increasingly attractive as the state rises, while the loss under \(D_1\) increases. The same mixture also keeps expected regret below \(3/16\).
Hence, by mixing delegation sets whose losses peak at different states, the principal flattens the statewise regret profile and improves the worst-case guarantee.
\end{exam*}

\section{Optimal Random Delegation}
\label{sec:optimalmechanism}

We now turn to the general class of random delegation rules and characterize the optimal robust delegation rule.

\begin{thm}
\label{thm:optimal-delegation}
There exists an optimal random delegation rule consisting of a common default action and a discretionary interval with random boundaries, i.e.,
\[
D=\{a_0\}\cup[\underline a,\bar a]
\qquad \Psi^*\text{-a.s.},
\]
where $a_0=b(1-\frac12e^{1-1/b})$ and the discretionary interval may be absent. Whenever the discretionary interval is present, its floor satisfies
$\underline a\ge 2b-a_0$. Moreover, the corresponding minimax regret is
\[
V_D=a_0^2<\min\{b^2,1/4\},
\]
which is the optimum of the unrestricted stochastic mechanism-design problem.
\end{thm}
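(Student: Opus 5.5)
The argument has three parts. First, relax delegation to the unrestricted stochastic mechanism-design problem and solve that relaxation. Second, show that its value is exactly $a_0^2$, using a least-favorable prior. Third, construct a random delegation rule of the stated form that implements the optimal mechanism.

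\emph{Relaxation and IC characterization.} A stochastic direct mechanism assigns to each report $\theta$ a distribution over actions. With quadratic utility, only its mean $m(\theta)$ and variance $s(\theta)$ matter. The agent's payoff from report $\hat\theta$ at state $\theta$ is
\[
-(m(\hat\theta)-\theta-b)^2-s(\hat\theta).
\]
This is linear in $\theta$ apart from terms that do not depend on the report, so the standard envelope argument applies. Incentive compatibility is then equivalent to two conditions:
\[
m \text{ nondecreasing}, \qquad s(\theta)=s(0)+\int_0^\theta 2\,(t+b-m(t))\,dm(t),
\]
with the usual modification at jumps of $m$. The principal's regret at $\theta$ is $(m(\theta)-\theta)^2+s(\theta)$. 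Every random delegation rule induces an IC mechanism, so $V_D\ge V_M$, where
\[
V_M=\inf_{m,s}\ \sup_\theta \bigl[(m(\theta)-\theta)^2+s(\theta)\bigr].
\]

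\emph{Solving the relaxed problem.} I guess the solution from the intuition in the introduction. At $\theta=0$ the mechanism sets $s(0)=0$ and $m(0)=a_0$, giving regret $a_0^2$. On a middle region $m$ is strictly increasing and keeps regret equal to $a_0^2$. Differentiating $(m-\theta)^2+s=a_0^2$ and substituting $s'=2(\theta+b-m)m'$ yields the ODE
\[
b\,m'=m-\theta.
\]
The deviation $d=m-\theta$ therefore satisfies $b\,d'=d-b$, so $d(\theta)=b-(b-a_0)e^{\theta/b}$. Above a cutoff, $m$ is flat (pooling) and regret keeps equalling $a_0^2$ up to $\theta=1$. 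The constant $a_0=b(1-\tfrac12 e^{1-1/b})$ should come from requiring that the pooled segment reach $\theta=1$ exactly with regret $a_0^2$; I will verify this algebra. After that, $a_0<\min\{b,1/2\}$ is elementary.

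For the matching lower bound, I will build a least-favorable prior $F^*$ (a weighted mixture supported on the responsive region plus an atom at $\theta=1$) together with Lagrange multipliers. The goal is to show that the Bayesian IC problem against $F^*$ has value at least $a_0^2$, which gives a saddle point. This step is the main obstacle. The natural approach is to write the Bayes objective as $\int[(m-\theta)^2+s]\,dF^*$, substitute the envelope formula for $s$, integrate by parts to obtain a functional of $m$ alone, and then check pointwise optimality of the candidate $m$, including the jump or flat parts, via a concavification or ironing argument. I will choose the density of $F^*$ so that the first-order condition reproduces the ODE above.

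\emph{Implementation by delegation.} I take $\Psi^*$ to draw a pair $(\underline a,\bar a)$, possibly with the interval absent, and offer $\{a_0\}\cup[\underline a,\bar a]$. At state $\theta$ the agent chooses $a_0$ if $\theta+b$ is closer to $a_0$ than to $\underline a$, and otherwise chooses $\min\{\max\{\theta+b,\underline a\},\bar a\}$. The $\theta=0$ agent must choose $a_0$ surely, which forces $\underline a\ge 2b-a_0$; this is exactly the indifference point for $\theta=0$. I will pick the joint distribution of floor and ceiling so that the induced mean and variance coincide with $(m,s)$ for every $\theta$. The marginal of $\underline a$ pins down the probability of leaving the default, and the marginal of $\bar a$ pins down the pooling behavior. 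Matching two functions of $\theta$ with a two-dimensional random boundary gives a system of integral equations, which I will solve explicitly, for instance by taking $\bar a$ as a monotone function of $\underline a$. Once the moments match, the regrets are identical, so $V_D\le V_M=a_0^2$, which completes the proof.
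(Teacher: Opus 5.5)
Your overall architecture matches the paper's: a mean--variance relaxation, a constant-regret responsive region followed by pooling, a least-favorable prior, and then a default plus a random interval above the gap. But there is a genuine gap. Both steps that carry the proof are only promised, and the guidance you give for them is partly wrong.

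\textbf{The candidate.} Where $m$ is flat, incentive compatibility makes $s$ flat too. So regret $(m-\theta)^2+s$ is strictly convex on the pooling region: it equals $a_0^2$ only at the two endpoints, not throughout. Equalizing the endpoints pins the pooled mean at $(1+t)/2$ and gives a one-parameter family $a_0(t)=b-\bigl(b-\tfrac{1-t}{2}\bigr)e^{-t/b}$. Your condition does not fix the cutoff; it must be chosen to minimize $a_0(t)$, which gives $t=1-b$.

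\textbf{The lower bound.} Substituting the envelope identity, regret at $\theta$ is $m(0)^2+s(0)+2b\bigl(m(\theta)-m(0)\bigr)-2\int_0^\theta m(x)\,dx+\theta^2$. This is linear in $m$ away from $\theta=0$. So no Bayes first-order condition reproduces $bm'=m-\theta$ (that ODE comes from equalizing regret), and no ironing is needed. Against a prior with survival function $S(x)=F((x,1])$, the coefficient on $m(x)$ is $2bf(x)-2S(x)$. The certificate requires three choices, with $\alpha=e^{1-1/b}$:
\begin{enumerate}
\item Make that coefficient vanish on $(0,1-b)$, i.e.\ $S(x)=Ce^{-x/b}$.
\item Put no mass on $(1-b,1)$ and an atom $\alpha C$ at $1$. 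The pooling remainder $2\alpha C\int_{1-b}^1\bigl(m(1)-m(x)\bigr)dx$ is then nonnegative by monotonicity.
\item Crucially, put an atom of mass $1-C=\alpha/2$ at $\theta=0$.
\end{enumerate}
Your $F^*$ never identifies this atom. Without it, the $m(0)$ terms reduce to $m(0)^2+s(0)-2b\,m(0)$, which is minimized at $b$ rather than $a_0$, and the bound fails. For example, with $C=1$ the Bayes value is $b^2(1-\alpha)<a_0^2$, attained by the constant action $b$. With the atom, $\int R\,dF^*-a_0^2=(m(0)-a_0)^2+s(0)+2\alpha C\int_{1-b}^1\bigl(m(1)-m(x)\bigr)dx\ge0$.

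\textbf{Implementation.} You plan to match mean and variance as two separate integral equations. The second is redundant, and this is the idea you are missing. A delegation rule is automatically incentive compatible, and every admissible set yields $a_0$ at $\theta=0$. So the induced variance starts at $0$, and the envelope identity forces it to equal $s$ once the induced mean equals $m$. The real task is existence. Let $t=\theta+b$, $\lambda(t)=m'(t-b)=1-\tfrac12e^{(t-1)/b}$, and let $G,H$ be the sub-distributions of floor and ceiling with common mass $q$. You need
\[
G(t)-H(t)+4(t-a_0)\,G'(2t-a_0)=
\begin{cases}
\lambda(t), & b<t<1,\\
0, & t>1,
\end{cases}
\]
with $G=0$ below $2b-a_0$, $H\le G$, both nondecreasing, and $q\le1$. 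Your proposal only says this will be ``solved explicitly.''

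The paper's construction runs as follows:
\begin{enumerate}
\item Solve a delay equation for $G$: switches at floor $\ell$ are driven by $\lambda-G$ evaluated at the midpoint $(a_0+\ell)/2$. Stop when that numerator vanishes.
\item Set $H=G-\lambda$ on the remaining range, with an atom at $1$.
\item Prove, through an integral identity, that the total mass stays below $\lambda(b)<1$.
\item Use the common-quantile coupling (your ``$\bar a$ monotone in $\underline a$'') to get $\underline a\le\bar a$.
\end{enumerate}
Without such an existence argument, $V_D\le a_0^2$ is not established.
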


The optimal rule has three components: a common default action, a gap above the default, and a discretionary interval with random boundaries. The default $a_0$ is available under every realization. Whenever additional discretion is granted, the lowest available action lies at least at $2b-a_0$, leaving a fixed gap above the default. Beyond this hurdle, the agent receives discretion over an interval $[\underline a,\bar a]$, whose floor and ceiling vary across realizations.

The random discretionary interval governs responsiveness at higher states. Its floor determines how easily the agent can leave the default, while its ceiling determines how far the action can adjust once discretion is exercised. Randomizing these boundaries generalizes the logic of the example in the previous section: different realizations place different weights on control and responsiveness, allowing the principal to hedge her statewise exposure.

The default and the gap address the low-state side of the problem. The default $a_0$ protects the principal at low states by limiting the agent's ability to choose excessively high actions. Since $a_0<b$, however, the default lies below the agent's ideal action even at the lowest state. If nearby higher actions were available, the agent would immediately move away from it. The gap makes the default effective by ruling out such modest upward deviations. In particular, $2b-a_0$ is exactly the action that leaves the agent indifferent to $a_0$ at $\theta=0$, so excluding $(a_0,2b-a_0)$ creates the hurdle required to sustain the default at low states. These two features are in fact necessary for optimality.

\begin{prop}
\label{prop:necessary-default-gap}
Every optimal random delegation rule has the same common default $a_0=b(1-\frac12e^{1-1/b})$ and excludes all actions in
$
(a_0,2b-a_0)$ 
almost surely.
\end{prop}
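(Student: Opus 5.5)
The plan is to derive both claims from two facts about any optimal rule $\Psi$: its value equals the unrestricted mechanism-design optimum $V_D=a_0^2$ (\Cref{thm:optimal-delegation}), and at $\theta=0$ its regret $\int a_D(0)^2\,d\Psi(D)$ cannot exceed $a_0^2$. Any delegation rule induces a stochastic direct mechanism $\theta\mapsto$ law of $a_D(\theta)$ under $\Psi$. This mechanism is incentive compatible and has the same statewise regret. So every optimal delegation rule induces an optimal stochastic mechanism. The approach is then to show that the state-$0$ outcome of every optimal mechanism is the point mass at $a_0$, and to translate this back into restrictions on the realized sets.

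\textbf{Step 1: pin down $a_D(0)$.} I will use the Lagrangian/saddle-point structure from the proof of \Cref{thm:optimal-delegation}. There is a least-favorable distribution $F^*$, which I expect to put an atom at $\theta=0$. Against $F^*$, the Bayesian problem's optimal IC mechanism is unique at $\theta=0$, and its outcome there is $\delta_{a_0}$. I will argue this as follows. Write the state-$0$ lottery through its mean $m$ and variance $s$. The regret at $0$ is $m^2+s$. The IC constraint of type $0$ against higher reports involves $m$ and $s$, and so do the envelope-type conditions characterizing the optimum. In the unrestricted problem, dispersion at the lowest state carries no incentive benefit and strictly adds loss. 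Strict convexity in $m$ then forces $s=0$ and $m=a_0$. Concretely, if some optimal $\Psi$ had $a_D(0)\neq a_0$ with positive probability, I will show either that the worst-case regret exceeds $a_0^2$, or that replacing the state-$0$ lottery by $\delta_{a_0}$ (keeping IC via the saddle characterization) yields a strict improvement against $F^*$, contradicting the saddle value. Hence $a_D(0)=a_0$ $\Psi$-a.s.

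\textbf{Step 2: derive the default and the gap.} Since $a_D(0)=a_0$ a.s., $a_0\in D$ a.s., so $a_0$ is a common default. The agent at $\theta=0$ has ideal point $b>a_0$ and picks the closest point of $D$ to $b$, breaking ties downward. He chooses $a_0$ only if no element of $D$ lies strictly within distance $b-a_0$ of $b$. The open interval of such points is exactly $(a_0,2b-a_0)$, so $D\cap(a_0,2b-a_0)=\emptyset$ a.s. The endpoint $2b-a_0$ is allowed because of the tie-breaking rule.

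\textbf{Main obstacle.} The hard part is the uniqueness claim in Step 1: that every optimal mechanism, not just the constructed one, has the deterministic outcome $a_0$ at $\theta=0$. The worst-case objective is a sup, so optimality alone does not force regret at $0$ to bind. My first attempt will use the saddle point. Any minimax-optimal $\Psi$ must be a best response to $F^*$. If $F^*$ has an atom at $0$ and the Bayesian objective against $F^*$ is strictly convex in the state-$0$ mean and strictly increasing in its variance, then optimality against $F^*$ pins down the state-$0$ lottery uniquely. If $F^*$ has no atom at $0$, I will instead use right-continuity of regret near $0$. Combined with monotonicity of $a_D(\cdot)$ in $\theta$, this transfers uniqueness from a neighborhood of $0$ to $\theta=0$ itself.
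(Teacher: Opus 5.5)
Your reduction matches the paper's argument: an optimal $\Psi$ induces an incentive-compatible mechanism with the same statewise regret, and that mechanism is optimal because $V_D=V^*$. Your Step~2 also matches it: from $a_D(0)=a_0$ almost surely you get the common default and the exclusion of $(a_0,2b-a_0)$, with $2b-a_0$ admissible by tie-breaking.

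The gap is Step~1, which carries all the content. You never show that every optimal mechanism has $\mu(0)=a_0$ and $\tau(0)=0$; you only assert it and sketch fallbacks. The concrete device you propose, replacing only the state-$0$ lottery by $\delta_{a_0}$, does not preserve incentive compatibility. The envelope condition ties the state-$0$ lottery to the rest of the mechanism through $U(0)=\lim_{\theta\downarrow0}U(\theta)$ and $\mu(0)\le\mu(0^+)$. A local swap therefore generally breaks IC and gives no contradiction with the saddle value.

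Nor does uniqueness follow from the atom $F^*(\{0\})$ plus ``strict convexity'' of the raw objective. Write $s=\mu^2+\tau$. In $(\mu,s)$ coordinates, both regret and the IC constraints are linear, and the only curvature is $s\ge\mu^2$. Bottom variance is costly because, through the envelope identity for $s$, it is inherited one-for-one by every state, not just by $\theta=0$. Your right-continuity fallback assumes uniqueness of the moments near $0$, which is the same unproved claim.

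The missing idea is the certificate \eqref{eq:saddle-certificate}. Substitute the envelope into regret:
\[
R(M,\theta)=s(0)+2b\bigl(\mu(\theta)-\mu(0)\bigr)-2\int_0^\theta\mu(x)\,dx+\theta^2 .
\]
Then integrate against the explicit least-favorable distribution. With $\alpha=e^{1-1/b}$ and $C=1-\alpha/2$, it has an atom $\alpha/2$ at $0$, an atom $\alpha C$ at $1$, and density $(C/b)e^{-\theta/b}$ on $(0,1-b)$. That density is chosen so its survival function equals $b$ times the density. The $\mu$-terms on $(0,1-b)$ then cancel, and for every IC mechanism
\[
R(M,F^*)-V^*=\bigl(\mu(0)-a_0\bigr)^2+\tau(0)+2\alpha C\int_{1-b}^1\bigl(\mu(1)-\mu(x)\bigr)\,dx ,
\]
where the last term is nonnegative by monotonicity. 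Your correct observation is that any minimax-optimal $M$ satisfies $R(M,F^*)=V^*$. Combined with the identity, this forces $\mu(0)=a_0$ and $\tau(0)=0$ at once. This is the uniqueness part of Proposition~\ref{prop:uniqueness}, which the paper simply invokes. With it, your Step~2 completes the proof.
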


The default-and-gap structure has a simple observable implication: investment is lumpy in scale. This is consistent with the well-documented tendency of capital expenditure to cluster at very low and relatively high levels, with comparatively little activity at intermediate scales \citep{doms1998capital,cooper2006nature}. The quantitative macro investment literature commonly captures such behavior through nonconvex adjustment costs.

Our mechanism provides a principal--agent microfoundation for such frictions: incentive considerations in delegated capital allocation can endogenously make intermediate investment scales unavailable. This mechanism has a direct institutional counterpart in firms' internal capital-budgeting processes. Capital-allocation authority is often delegated to managers with superior local information \citep{graham2015capital}, while firms commonly use hurdle rates in project evaluation \citep{graham2001theory}. 
Practitioner accounts describe stage-gate procedures in which projects must demonstrate sufficient value before substantial capital is committed \citep{barshop2016capital}. Projects that fail to clear a gate may be held, canceled, or returned for revision, while those that proceed face much larger capital commitments at later stages. The resulting choice structure mirrors our mechanism: the status quo is maintained until a project clears a hurdle, after which substantially greater capital commitments become feasible. Importantly, the same practitioner account links these controls to agency concerns. Project sponsors may downplay risks out of self-interest, in part because larger capital commitments allow them to build larger businesses and may improve compensation and promotion prospects.

The remainder of this section derives the optimal random delegation rule in Theorem~\ref{thm:optimal-delegation}. The challenge is that the principal chooses a distribution over feasible-action sets, and the agent's behavior depends on the entire structure of each realized set. We therefore recast the problem in mechanism-design terms, first characterizing the optimal stochastic incentive-compatible outcome rule and then showing that it admits an implementation through random delegation.

\subsection{Stochastic Mechanisms}
\label{subsec:stochastic-mechanisms}

Now we allow the principal to choose an arbitrary direct mechanism.\footnote{The standard revelation principle argument still applies here, so the restriction is without loss.} %A direct \emph{stochastic mechanism} assigns a lottery over actions to each report of the state. The agent chooses a report before the action is drawn and evaluates the resulting lottery in expectation. %This formulation lets us separate the incentives needed to use the agent's information from the additional requirement that actions arise through delegation.
A (stochastic) mechanism is a measurable family $M=(M_\theta)_{\theta\in\Theta}$ of probability distributions on $\cA$ with finite second moments. We say a mechanism is incentive compatible if
\begin{equation}
\label{eq:stochastic-IC}
\int_{\R}(a-\theta-b)^2\,\dd M_\theta(a)
\leq
\int_{\R}(a-\theta-b)^2\,\dd M_{\widehat\theta}(a)
\qquad\text{for all }\theta,\widehat\theta\in\Theta.
\end{equation}
Let $\cM$ denote the class of these mechanisms. 
For a mechanism $M$ and a state distribution $F$, define the principal's regret by
\[
R(M,F)
=
\int_\Theta\int_{\cA}(a-\theta)^2\,\dd M_\theta(a)\,\dd F(\theta).
\]
When $F$ is degenerate at $\theta$, we write
$R(M,\theta)=\int_{\cA}(a-\theta)^2\,\dd M_\theta(a)$
for the corresponding statewise regret.

The unrestricted robust design problem therefore is given as
\begin{equation}
\label{eq:unrestricted-value}
V^*=\inf_{M\in\cM}\sup_{F\in\cF}R(M,F)
=\inf_{M\in\cM}\sup_{\theta\in\Theta}R(M,\theta).
\end{equation}
The second equality is the mechanism-design counterpart of the statewise representation in Section~\ref{sec:model}. We verify optimality by constructing a Nash equilibrium of a zero-sum game between the principal and adversarial nature. For any distribution \(F\), the problem of minimizing \(R(M,F)\) over incentive-compatible mechanisms provides a lower bound on the minimax value, while any mechanism \(M\) provides an upper bound through its worst-case regret. We construct \(M^*\) and \(F^*\) for which these bounds coincide, thereby establishing a saddle point such that
\[
R(M^*,F)\le R(M^*,F^*)\le R(M,F^*)
\]for every \(F\in\cF\) and every incentive-compatible mechanism \(M\in\cM\).

Working directly with the full class of stochastic mechanisms is demanding. With quadratic losses, however, each lottery matters only through its mean and variance \citep{goltsman2009mediation,kovac2009stochastic}, making the problem tractable. Define the conditional mean and variance of the action by
\[
\mu(\theta)=\int_{\R}a\,\dd M_\theta(a),
\qquad
\tau(\theta)=\int_{\R}\bigl(a-\mu(\theta)\bigr)^2\,\dd M_\theta(a).
\]
The principal's regret and the agent's payoff from a report $\widehat\theta$ are, respectively,
\begin{align}
R(M,\theta)&=\bigl(\mu(\theta)-\theta\bigr)^2+\tau(\theta),
\label{eq:mean-variance-regret}\\
U(\widehat\theta\mid\theta)&=-\bigl(\mu(\widehat\theta)-\theta-b\bigr)^2-\tau(\widehat\theta).
\label{eq:mean-variance-utility}
\end{align}
The first term in the principal's loss is the distortion in the average action; the second is the cost of dispersion around that average. Variance is costly to both players and therefore acts like money burning: it destroys surplus, but can improve incentives by making reports that induce more attractive average actions less appealing to the agent. Because no higher moments enter either player's payoff, we can characterize the robust optimum through \((\mu,\tau)\).

The following mean--variance characterization of incentive compatibility adapts
a standard result from \citet{goltsman2009mediation,kovac2009stochastic} to our
notation.
\begin{lem}[Incentive compatibility]
\label{lem:ICmoments}
A pair $(\mu,\tau)$ with $\tau\geq0$ is induced by an incentive-compatible stochastic mechanism if and only if 
\begin{itemize}
\item $\mu$ is nondecreasing and 
\item the truthful payoff $U(\theta):=U(\theta\mid\theta)$ satisfies
\begin{equation}
\label{eq:IC-envelope}
U(\theta)=U(0)+2\int_0^\theta\bigl(\mu(x)-x-b\bigr)\,\dd x.
\end{equation}
\end{itemize}
\end{lem}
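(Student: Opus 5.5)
The argument reduces incentive compatibility to the standard envelope characterization, written in terms of the induced mean and variance. By \eqref{eq:mean-variance-utility}, the agent's payoff from report $\widehat\theta$ when the state is $\theta$ is
\[
U(\widehat\theta\mid\theta)=-\bigl(\mu(\widehat\theta)-b\bigr)^2-\tau(\widehat\theta)+2\theta\bigl(\mu(\widehat\theta)-b\bigr)-\theta^2 .
\]
Define $g(\widehat\theta)=-\bigl(\mu(\widehat\theta)-b\bigr)^2-\tau(\widehat\theta)$ and $h(\widehat\theta)=2\bigl(\mu(\widehat\theta)-b\bigr)$. Then $U(\widehat\theta\mid\theta)+\theta^2=g(\widehat\theta)+\theta h(\widehat\theta)$ is affine in $\theta$ for each fixed report. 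Condition \eqref{eq:stochastic-IC} is exactly $U(\theta\mid\theta)\ge U(\widehat\theta\mid\theta)$ for all $\theta,\widehat\theta$, so it depends on $M$ only through $(\mu,\tau)$. This is what lets us speak of a pair $(\mu,\tau)$ being induced by an IC mechanism. Conversely, any measurable $(\mu,\tau)$ with $\tau\ge0$ is induced by some mechanism, for instance $M_\theta$ placing mass $1/2$ on each of $\mu(\theta)\pm\sqrt{\tau(\theta)}$. So the lemma is a statement about the function $W(\theta):=U(\theta)+\theta^2=\sup_{\widehat\theta}\{g(\widehat\theta)+\theta h(\widehat\theta)\}$.

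For necessity, I will use the usual single-crossing argument. Add the IC inequalities for $(\theta,\widehat\theta)$ and $(\widehat\theta,\theta)$; this gives $(\theta-\widehat\theta)\bigl(h(\theta)-h(\widehat\theta)\bigr)\ge0$, so $\mu$ is nondecreasing. Under IC, $W$ is a supremum of affine functions of $\theta$. The supremum is finite because it is attained at the truthful report, so $W$ is convex on $[0,1]$ and $h(\theta)$ is a subgradient of $W$ at $\theta$. A finite convex function on a compact interval is Lipschitz on the interior, and the subgradient bounds give $h(\theta)\le (W(\theta')-W(\theta))/(\theta'-\theta)\le h(\theta')$. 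Hence $W$ is absolutely continuous with $W'=h$ almost everywhere, and
\[
W(\theta)=W(0)+\int_0^\theta 2\bigl(\mu(x)-b\bigr)\dd x .
\]
Subtracting $\theta^2=\int_0^\theta 2x\,\dd x$ yields \eqref{eq:IC-envelope}.

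I expect the main technical obstacle here to be the endpoints. Boundedness of $\mu$ follows from the monotonicity already established together with finiteness of $\mu$ at the endpoints, which is guaranteed by finite second moments. That makes $W$ Lipschitz on all of $[0,1]$, so the integral representation holds up to $\theta=0$.

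For sufficiency, assume $\mu$ is nondecreasing and \eqref{eq:IC-envelope} holds. Then $W(\theta)=U(\theta)+\theta^2$ satisfies $W(\theta)=W(\widehat\theta)+\int_{\widehat\theta}^{\theta}h(x)\,\dd x$. Expanding the definitions gives
\[
W(\theta)-\bigl[g(\widehat\theta)+\theta h(\widehat\theta)\bigr]=\int_{\widehat\theta}^{\theta}\bigl(h(x)-h(\widehat\theta)\bigr)\dd x,
\]
using $g(\widehat\theta)+\widehat\theta h(\widehat\theta)=W(\widehat\theta)$. The integrand has the sign of $x-\widehat\theta$ because $h$ is nondecreasing, so the integral is nonnegative whether $\theta>\widehat\theta$ or $\theta<\widehat\theta$. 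This is precisely $U(\theta\mid\theta)\ge U(\widehat\theta\mid\theta)$, and any mechanism inducing $(\mu,\tau)$, such as the two-point lottery above, is therefore incentive compatible.
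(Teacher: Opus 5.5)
Your proof is correct and follows the paper's argument. The paper likewise writes the report payoff as affine in the type (via the second moment $s=\mu^2+\tau$), adds paired IC constraints to get monotonicity, and appeals to the envelope theorem, which you spell out via convexity and subgradients. It proves sufficiency with the same identity $U(\theta)-U(\widehat\theta\mid\theta)=2\int_{\widehat\theta}^{\theta}\bigl(\mu(x)-\mu(\widehat\theta)\bigr)\dd x$ and realizes any $\tau\geq0$ with the same two-point lottery.

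One small correction: a finite convex function on a compact interval need only be locally Lipschitz on the interior, not Lipschitz there. Your subgradient sandwich with the bounded nondecreasing $h$ already gives a global Lipschitz bound on $[0,1]$, so the argument is unaffected.
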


Monotonicity reflects the agent's preference for higher actions at higher states. The envelope condition determines the dispersion required to sustain a given response of the mean. Wherever the moment schedules are differentiable,
\[
\tau'(\theta)
=
2\bigl(\theta+b-\mu(\theta)\bigr)\mu'(\theta).
\]
As long as the mean remains below the agent's ideal action, responsiveness to the agent's information, \(\mu'(\theta)>0\), requires strictly increasing variance. Keeping the mean locally flat, by contrast, avoids this additional variance cost. Since matching the agent's ideal action in any state would already deliver the full-delegation regret \(b^2\), any mechanism that improves on that benchmark operates in this region.

The mean--variance reformulation thus recasts the mechanism-design problem as a tradeoff between responsiveness and dispersion. The optimal mechanism lets the mean rise while the value of adaptation justifies the associated increase in variance, and eventually flattens it when further responsiveness becomes too costly.

\subsection{Optimal Robust Mechanism}
\label{subsec:optimal-robust-mechanism}

With the mean--variance representation in hand, we characterize the robust optimum by constructing a saddle point \((\mu^*,\tau^*,F^*)\). 

\begin{prop}[Optimal robust mechanism]
\label{prop:optimal mech}
\label{thm:optimalmechanism}
\label{prop:uniqueness}
Optimal mechanisms exist, and every optimal mechanism induces the same conditional mean and variance $(\mu^*,\tau^*)$. These moments have the following properties:
\begin{enumerate}
\item At $\theta=0$, the action is deterministic: $\mu^*(0)=a_0=b(1-\frac12e^{1-1/b})$ and $\tau^*(0)=0$.
\item On $(0,1-b)$, both $\mu^*$ and $\tau^*$ are strictly increasing, while the upward distortion $\mu^*(\theta)-\theta$ is strictly decreasing. Regret is constant over $\theta$ throughout the interval.
\item On $[1-b,1]$, both $\mu^*$ and $\tau^*$ are constant, with $\mu^*(\theta)=1-b/2$. Regret is highest at the two ends of this region and strictly lower in its interior.
\end{enumerate}
The unrestricted robust value is
$
V^*=a_0^2<\min\{b^2,1/4\}.$ 
\end{prop}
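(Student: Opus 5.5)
The plan is to prove the proposition by constructing a saddle point $(\mu^*,\tau^*,F^*)$ as described at the start of this subsection. The upper bound will come from computing the worst-case regret of $M^*$. The lower bound will come from solving the Bayesian problem $\min_{M\in\cM}R(M,F^*)$ and showing that its value equals $a_0^2$.

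\emph{Step 1 (guess $\mu^*,\tau^*$ from equalized regret).} On $(0,1-b)$ I impose constant regret $(\mu-\theta)^2+\tau\equiv c$ together with the local condition $\tau'=2(\theta+b-\mu)\mu'$ from Lemma~\ref{lem:ICmoments}. Write $d=\mu-\theta$. Differentiating the constant-regret condition and substituting the local condition collapses to the linear ODE
\[
b\,d'=d-b,
\qquad\text{so}\qquad
d(\theta)=b-(b-a_0)e^{\theta/b}.
\]
The boundary condition $\tau^*(0)=0$ gives $c=a_0^2$. I then require the mean to hit the pooling value $1-b/2$ exactly at $\theta=1-b$, i.e.\ $d(1-b)=b/2$; this pins down $a_0=b(1-\tfrac12e^{1-1/b})$.

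On $[1-b,1]$ I set $\mu^*\equiv1-b/2$ and $\tau^*\equiv a_0^2-b^2/4$. With these choices regret equals $a_0^2$ at $\theta=1-b$ and at $\theta=1$, and it is strictly smaller in between because $(1-b/2-\theta)^2$ is convex in $\theta$. It remains to check three things:
\begin{itemize}
\item $\mu^*$ is nondecreasing, which holds because $d'>-1$;
\item $\tau^*\ge0$, which reduces to $a_0>b/2$, i.e.\ $e^{1-1/b}<1$;
\item the envelope condition \eqref{eq:IC-envelope} holds, which is built into the ODE and the constancy on the pooling region.
\end{itemize}
Lemma~\ref{lem:ICmoments} then shows that $(\mu^*,\tau^*)$ is implementable, and its worst-case regret is $a_0^2$. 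The comparison $a_0^2<\min\{b^2,1/4\}$ is elementary.

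\emph{Step 2 (linearize the Bayesian problem).} Using the envelope formula, write $\tau(\theta)=-U(\theta)-(\mu(\theta)-\theta-b)^2$. Substituting this into the regret gives
\[
R(M,\theta)=2b(\mu(\theta)-\theta)-b^2-U(0)-2\int_0^\theta(\mu(x)-x-b)\,\dd x .
\]
This expression is \emph{linear} in $\mu$ apart from $U(0)$. Since $U(0)\le-(\mu(0)-b)^2$, the best choice is to make $\tau(0)=0$ bind.

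Integrating against $F$ gives a functional in which the weight on $\mu(x)$ is proportional to $b\,f(x)-(1-F(x))$, plus atom terms. The task is then to choose $F^*$ so that the first-order conditions hold at $\mu^*$:
\begin{itemize}
\item an atom at $0$, whose size makes the first-order condition in $\mu(0)$ deliver $\mu(0)=a_0$;
\item a density on $(0,1-b)$ with $b f=1-F$, so that $1-F\propto e^{-x/b}$, which makes the weight on $\mu$ vanish there;
\item no mass on $(1-b,1)$ and an atom at $1$, sized so that the ironing (pooling) condition holds on $[1-b,1]$: the integrated weight is zero over the pooling region and has the correct sign on its tails.
\end{itemize}
The support of $F^*$ lies in the set where regret equals $a_0^2$, namely $[0,1-b]\cup\{1\}$. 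This gives $R(M^*,F)\le R(M^*,F^*)$ for every $F$.

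\emph{Step 3 (main obstacle: the lower bound).} I need to show that $M^*$ minimizes $R(\cdot,F^*)$ over all of $\cM$. The difficulty is that in the relaxed problem the interior constraints $\tau\ge0$ and monotonicity of $\mu$ are dropped or only partially used, and the linear objective could then be unbounded. I would handle this in two stages.
\begin{enumerate}
\item Rewrite the objective via integration by parts as a function of $\mu(0)$ plus a nonnegative ironing term in the pooling region. The ironing term is controlled through monotonicity of $\mu$, using the standard Myerson-style ironed-virtual-value argument; the choice of the atom at $1$ must make the relevant cumulative weight nonnegative.
\item Show the remaining terms are bounded below by $a_0^2$ once $\tau(0)\ge0$ and $\tau(1)\ge0$ are imposed, since the constraint at $\theta=1$ interacts with the atom there.
\end{enumerate}
If this bookkeeping fails, my fallback is to add Lagrange multipliers on $\tau(\theta)\ge0$ at $\theta\in\{0,1\}$ and verify complementary slackness directly.

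\emph{Step 4 (uniqueness).} For any optimal $M$, the equality $R(M,F^*)=a_0^2$ forces equality in every inequality of Step 3. This pins down $\mu(0)$ and $\tau(0)=0$, and forces $\mu$ to be constant on the ironed region; strictness in the convex term in $\mu(0)$ is what delivers this. Worst-case regret at most $a_0^2$ on $\operatorname{supp}F^*$ then forces regret to equal $a_0^2$ $F^*$-a.e. on $[0,1-b]$. Combined with the envelope condition, this reproduces the ODE with the same initial value, so $(\mu,\tau)=(\mu^*,\tau^*)$.

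Existence follows from the construction itself.
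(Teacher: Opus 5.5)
Your existence and optimality argument is the paper's. The conditions you impose on $F^*$ are: an atom at $0$ calibrated so that $\mu(0)=a_0$, $bf^*=1-F^*$ on $(0,1-b)$, no mass on $(1-b,1)$, and an atom at $1$. These reproduce exactly the paper's distribution, with masses $\alpha/2$ at $0$ and $\alpha C$ at $1$, where $\alpha=e^{1-1/b}$ and $C=1-\alpha/2$. The Fubini step then gives $R(M,F^*)-a_0^2=(\mu(0)-a_0)^2+\tau(0)+2\alpha C\int_{1-b}^1(\mu(1)-\mu(x))\,\dd x$. The worries in your Step 3 do not arise. The weight on $\mu$ vanishes identically on $(0,1-b)$, so the bound uses only monotonicity and $\tau(0)\ge0$. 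The constraint $\tau(1)\ge0$ and the Lagrangian fallback are unnecessary. You also need not compute the additive constant, because $R(M^*,F^*)=a_0^2$ fixes it.

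The genuine gap is in Step 4: you never pin down the pooled mean. From $R(M,F^*)=a_0^2$ you correctly obtain:
\begin{itemize}
\item $\mu(0)=a_0$, which is what strict convexity delivers, not constancy on the ironed region;
\item $\tau(0)=0$;
\item $\mu=\mu^*$ on $(0,1-b)$;
\item $\mu\equiv m$ on $(1-b,1]$ for \emph{some} constant $m$, from the vanishing ironing term plus monotonicity.
\end{itemize}
Nothing you use determines $m$. Monotonicity gives only $m\ge1-b/2$. The atom at $1$ is uninformative: in \eqref{eq:regret-envelope-integrated} the term $2bm$ cancels $-2\int_{1-b}^1 m\,\dd x$, so $R(M,1)$ is the same for every $m$.

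Here is a concrete mechanism your Step 4 cannot exclude. Keep $\mu^*$ on $[0,1-b]$, set $\mu\equiv m\in(1-b/2,\,1+a_0]$ on $(1-b,1]$, and define the variance by \eqref{eq:second-moment-envelope} with $s(0)=a_0^2$. Then:
\begin{itemize}
\item the mechanism is incentive compatible, with $\tau=a_0^2-(1-m)^2\ge0$ on $(1-b,1]$;
\item its regret equals $a_0^2$ on $[0,1-b]\cup\{1\}$;
\item $R(M,F^*)=a_0^2$.
\end{itemize}
It therefore satisfies every condition in your Step 4, yet it differs from $\mu^*$.

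What rules it out is optimality at states \emph{outside} $\operatorname{supp}F^*$. Letting $\theta\downarrow1-b$ in \eqref{eq:regret-envelope-integrated} gives $R\bigl(M,(1-b)^+\bigr)=a_0^2+2b\bigl(m-(1-b/2)\bigr)$. Worst-case regret at most $a_0^2$ therefore forces $m=1-b/2$. Monotonicity then fixes $\mu(1-b)$, and \eqref{eq:second-moment-envelope} with $s(0)=a_0^2$ yields $\tau=\tau^*$ at every state.
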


The proposition identifies the moments of every optimum, rather than a unique lottery at each state. Different lotteries can deliver these moments and therefore generate the same payoffs and incentives. This flexibility will matter for delegation: we need to implement the optimal mean and variance, without reproducing an arbitrarily chosen family of action lotteries. The closed-form expressions are given in Appendix~\ref{app:mechanism-details}. Here, we demonstrate the idea of the construction using 
Figure~\ref{fig:optimal-moments}.

\begin{figure}[t]
\begin{center}
\includegraphics[width=\textwidth]{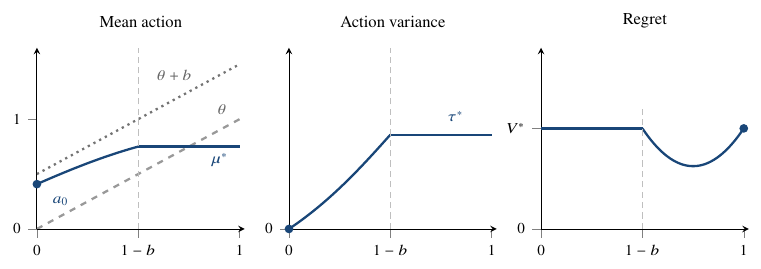}
\caption{Optimal mean, variance, and regret across states.}\label{fig:optimal-moments}
\medskip
\end{center}
\scriptsize{Notes: The simulation is plotted at $b=0.5$. The mean and variance increase up to $1-b$ and are constant thereafter. Regret equals $V^*$ on $[0,1-b]$ and at $1$, and is lower on $(1-b,1)$.
}
\end{figure}

One useful observation applies throughout the optimum. The mean always lies below the agent's ideal action, \(\mu^*(\theta)<\theta+b\). Otherwise, the principal's regret at that state would be at least \(b^2\), already as large as under full delegation. Since the optimum strictly improves on full delegation, such a state cannot arise.

The lowest state receives the deterministic action \(a_0\in(0,b)\). The positive distortion at the bottom is part of the minimax compromise. Setting the mean equal to the principal's ideal there would make these states strictly better than the worst case, leaving slack that can instead be used to relax incentives elsewhere. The optimum tolerates some upward distortion at the bottom until its regret is brought into balance with the other worst-case states. Starting from zero variance is essential. Along the optimal mean schedule, the mean stays below the agent's ideal and is nondecreasing, so the variance required by incentive compatibility is smallest at the bottom. Any positive initial variance could be removed from every state without changing the relative attractiveness of reports. This would preserve incentives and reduce regret everywhere.

For $0<\theta<1-b$, the action mean rises but less than one-for-one with the state. The agent's information therefore affects the decision, while the upward distortion $\mu^*(\theta)-\theta$ gradually falls. At the same time, the distance between the mean and the agent's ideal grows, making further responsiveness increasingly costly in variance. The optimal rule uses the improvement in the average action to absorb this cost. Equation~\eqref{eq:mean-variance-regret} then gives the constant regret $V^*$ throughout the responsive region.

Above $1-b$, both moments are fixed. Pooling stops the accumulation of variance but also prevents the mean from following the state. The pooled mean $1-b/2$ is the midpoint of $[1-b,1]$, so the two ends of this region incur the same squared distortion. Intermediate states are closer to the pooled mean and have strictly lower regret. Thus, the worst case is attained at every state in the responsive region and again at the highest state; the principal does better in between.

The preceding tradeoff between responsiveness and the cost of dispersion suggests how to construct a candidate mechanism and verify its optimality through a saddle-point argument. %Write $y(\theta)=\mu(\theta)-\theta$ for the action distortion in the mean. 
Combining incentive compatibility with the regret decomposition gives
\begin{equation}
\label{eq:regret-derivative}
R'(M,\theta)=2b\mu'(\theta)-2[\mu(\theta)-\theta].
\end{equation}
If regret is constant over a responsive region starting at zero, then $R'(M, \theta)=0$, and the resulting action mean must satisfy
\[
\mu(\theta)-\theta=b-(b-a_0)e^{\theta/b}.
\]
For a proposed pooling cutoff $t$, regret within the pooling region is largest at one of its two endpoints, $t$ and $1$. Equalizing these two endpoint losses requires the pooled mean to be their midpoint, $(1+t)/2$. Matching this mean to the responsive region yields a candidate default action
\[
a_0(t)=b-\left(b-\frac{1-t}{2}\right)e^{-t/b}.
\]
Within this candidate family, the worst-case regret is $a_0(t)^2$. Since
\[
a_0'(t)=\frac{b+t-1}{2b}e^{-t/b},
\]
it is minimized at $t=1-b$. Pooling earlier sacrifices responsiveness over too many states; pooling later requires a higher initial action to sustain the longer responsive region. The cutoff in Proposition~\ref{prop:optimal mech} balances these costs.

This derivation identifies the best member of a candidate family. To establish optimality among all stochastic mechanisms, we construct a state distribution $F^*$ against which no incentive-compatible mechanism does better. This distribution assigns positive probability to each endpoint, has a strictly positive density on $(0,1-b)$, and puts no mass on $(1-b,1)$. It therefore places weight only where the candidate attains its maximum regret. Its density is chosen so that the envelope condition cancels the effect of changing the mean within the responsive region. The remaining terms give the lower bound
\begin{equation}
\label{eq:saddle-certificate}
\begin{split}
R(M,F^*)-V^*
={}&\bigl(\mu(0)-a_0\bigr)^2+\tau(0)+2F^*(\{1\})\int_{1-b}^1\bigl(\mu(1)-\mu(\theta)\bigr)\,\dd\theta
\geq0.
\end{split}
\end{equation}
Each term has a direct interpretation. A mechanism loses by moving its initial mean away from $a_0$, by adding variance at the bottom, or by continuing to raise the mean over the upper region. The last term is nonnegative because incentive compatibility requires a nondecreasing mean. The candidate makes all three terms zero and keeps regret at most $V^*$ in every state. Hence,
\[
R(M^*,F)\leq R(M^*,F^*)=V^*\leq R(M,F^*)
\qquad\text{for every }F\in\cF\text{ and }M\in\cM.
\]

The same argument pins down the moments of every optimum. Any optimal mechanism must make each term in~\eqref{eq:saddle-certificate} zero and deliver regret $V^*$ wherever $F^*$ places weight. Once the bottom action and variance are fixed, constant regret and incentive compatibility determine the responsive mean; pooling determines the remaining states. The envelope condition then determines the variance. Appendix~\ref{app:mechanism-details} gives the distribution and the formal argument, including mechanisms whose moment schedules need not initially be smooth.

\subsection{Implementation by Random Delegation}
\label{subsec:implementation-random-delegation}
\label{sec:implementation}
\label{subsec:optimal-delegation}

We now show how random delegation delivers the optimal mean and variance. A random delegation rule is a special stochastic mechanism: conditional on each realized set \(D\), the induced action \(a_D(\theta)\) must be the agent's optimal choice from that set at every state. Hence every random delegation rule induces an interim incentive-compatible stochastic mechanism, and therefore
\[
V^*\leq V_D.
\]
The inclusion is strict because random delegation imposes a cross-state consistency restriction: the supports of the induced action lotteries cannot be chosen independently across states, since all states face the same realized delegation set. 
The substantive question is whether this additional structure is costly. Under a direct mechanism, the agent compares lotteries before the action is drawn. Under delegation, he chooses an action after the set is realized, so optimality must hold realization by realization. We show that despite this stronger requirement, a common default and one discretionary interval suffice to reproduce the optimal mean and variance and attain \(V^*\).

\paragraph{Necessary restrictions.}
The optimal moments first identify what every realized delegation set must have in common. Since $\mu^*(0)=a_0$ and $\tau^*(0)=0$, almost every set must induce the same action $a_0$ at the lowest state. Hence $a_0$ must belong to almost every realized set. At this state, the agent's ideal action is $b$, and every action in $(a_0,2b-a_0)$ is strictly closer to $b$ than the default. All such actions must therefore be excluded. This establishes Proposition~\ref{prop:necessary-default-gap}.

These restrictions already rule out optimality of random interval delegation. An interval containing $a_0$ cannot offer actions above the gap without also filling in the gap. Any actions below $a_0$ would never be chosen, since every state's ideal action exceeds $a_0$. Thus, a rule supported on intervals and satisfying the necessary restrictions would induce $a_0$ at every state, contradicting the responsiveness of $\mu^*$ on $(0,1-b)$.

The remaining question is whether the necessary default and gap leave enough flexibility to implement the optimal moments. We establish sufficiency using sets of the form
\[
D=\{a_0\},\qquad\text{or}\qquad
D=\{a_0\}\cup[\underline a,\bar a],
\qquad 2b-a_0\leq\underline a\leq\bar a.
\]
The interval may be degenerate. The argument proceeds by representing the random endpoints, matching the mean, and then showing that the variance follows automatically.

\paragraph{Representing the random endpoints.}
Let $q$ be the probability that a discretionary interval is offered. Represent the endpoints by their unconditional cumulative probabilities,
\[
G(x)=\Pr_\Psi\bigl(D\neq\{a_0\},\ \underline a\leq x\bigr),
\qquad
H(x)=\Pr_\Psi\bigl(D\neq\{a_0\},\ \bar a\leq x\bigr).
\]
Both have total mass $q$. Since $\underline a\leq\bar a$, the upper endpoint must be larger in the sense of first-order stochastic dominance:
\[
H(x)\leq G(x)\quad\text{for every }x,
\qquad
G(\infty)=H(\infty)=q.
\]
Conversely, suppose $q>0$ and normalize the two subdistributions to obtain the CDFs $G/q$ and $H/q$. Strassen's monotone coupling theorem implies that $H\leq G$ is sufficient for these marginals to admit a coupling satisfying $\underline a\leq\bar a$ almost surely \citep{strassen1965existence}. In one dimension, the coupling is explicit: draw a common uniform variable $Z\in(0,1)$ and choose the $Z$-quantiles of $G/q$ and $H/q$. With the remaining probability $1-q$, offer only $\{a_0\}$. Thus, $(q,G,H)$ provides an equivalent and more tractable representation of the distribution over delegation sets. The case $q=0$ corresponds to offering only $\{a_0\}$.

\paragraph{Induced moments.}
For a given state, the agent either stays at the default, moves to the floor, chooses his ideal action inside the interval, or is constrained by the ceiling. He leaves the default precisely when
\[
\theta+b>\frac{a_0+\underline a}{2}.
\]
These four choices express the induced moments directly in terms of the endpoint distributions. Taking $G$ to be continuous, as in our construction, for $k=1,2$ we have
\begin{equation}
\label{eq:delegation-induced-moments}
\begin{aligned}
\E_\Psi[a_D(\theta)^k]
={}&a_0^k\Bigl[1-G\bigl(2(\theta+b)-a_0\bigr)\Bigr]
+\int_{\theta+b}^{2(\theta+b)-a_0}a^k\,\dd G(a)\\
&+(\theta+b)^k\bigl[G(\theta+b)-H(\theta+b)\bigr]
+\int_{-\infty}^{\theta+b}a^k\,\dd H(a).
\end{aligned}
\end{equation}
The four terms correspond, in order, to choosing the default, the floor, the ideal action, and the ceiling. In particular, the moments depend only on the two endpoint distributions, so we can choose a coupling that respects the order of the endpoints without imposing additional moment restrictions.

\paragraph{Matching the mean.}
Write $\mu$ and $\tau$ for the moments induced by the delegation rule. Differentiating the first moment gives, almost everywhere,
\begin{equation}
\label{eq:endpoint-response}
\mu'(\theta)=G(\theta+b)-H(\theta+b)
+4(\theta+b-a_0)G'\bigl(2(\theta+b)-a_0\bigr).
\end{equation}
The two terms capture two distinct ways in which the agent's information affects the decision. The difference $G(\theta+b)-H(\theta+b)$ is the probability that his ideal action lies inside the discretionary interval; in these realizations, his action tracks the state. The second term captures switches across the gap. A switch at state $\theta$ occurs at floor $\underline a=2(\theta+b)-a_0$ and raises the action by $2(\theta+b-a_0)$. As the state increases, floors cross this switching threshold at rate $2G'(2(\theta+b)-a_0)$.

Since the default and gap already ensure $\mu(0)=a_0=\mu^*(0)$, matching $\mu^{*\prime}$ is enough to match the entire mean schedule. The two endpoint distributions allow the principal to combine adjustment within intervals with switches out of the default. She need only make their combined response equal the desired response of the mean. This leaves flexibility in implementation: the mean does not separately prescribe how much responsiveness must come from each source. Appendix~\ref{app:endpoint-construction} gives a pair of valid distributions satisfying the ordering and mass restrictions above and reproducing $\mu^*$.

\paragraph{Matching the variance.}
The second moment does not impose an independent target on the endpoint distributions. Differentiating~\eqref{eq:delegation-induced-moments} for $k=2$ gives
\[
\frac{\dd}{\dd\theta}\E_\Psi[a_D(\theta)^2]
=2(\theta+b)\mu'(\theta).
\]
There is a direct incentive explanation for this identity. Within an interval, the chosen action equals $\theta+b$, so a marginal increase in the action raises its square at rate $2(\theta+b)$. At a switch from $a_0$ to $\underline a$, indifference requires $a_0+\underline a=2(\theta+b)$, and hence
\[
\underline a^2-a_0^2
=2(\theta+b)(\underline a-a_0).
\]
Thus, both sources of responsiveness generate the same relationship between changes in the first and second moments.

Using $\tau=\E_\Psi[a_D^2]-\mu^2$, we obtain
\[
\tau'(\theta)=2\bigl(\theta+b-\mu(\theta)\bigr)\mu'(\theta),
\]
exactly the incentive-compatibility relation satisfied by the optimal mechanism. Once the mean matches $\mu^*$, the induced variance has the same derivative as $\tau^*$. The common default and gap also give the same initial value, $\tau(0)=0=\tau^*(0)$. The variance therefore matches at every state. The role of the default and gap is consequently twofold: they discipline choices at low states and provide the boundary condition that makes matching the mean sufficient for matching both moments.

\begin{prop}[Implementation]
\label{thm:delegation-implementation}
There is a random delegation rule $\Psi^*$ supported on sets $\{a_0\}$ or $\{a_0\}\cup[\underline a,\bar a]$, with $\underline a\geq2b-a_0$, such that
\[
\E_{\Psi^*}[a_D(\theta)]=\mu^*(\theta),
\qquad
\Var_{\Psi^*}[a_D(\theta)]=\tau^*(\theta)
\qquad\text{for every }\theta\in\Theta.
\]
Consequently, $V_D=V^*$, and $(\Psi^*,F^*)$ is a saddle point for the delegation problem.
\end{prop}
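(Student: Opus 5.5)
The plan is to build $\Psi^*$ explicitly in the $(q,G,H)$ representation. Given a triple $(q,G,H)$ that meets the ordering and mass restrictions, the quantile coupling described above turns it into a distribution over sets $\{a_0\}$ or $\{a_0\}\cup[\underline a,\bar a]$ with $\underline a\ge 2b-a_0$. The first moment is then matched through the endpoint-response equation~\eqref{eq:endpoint-response}. Once the mean is matched, the variance follows from the second-moment identity.

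\textbf{Closed form of the target.} From the closed form underlying Proposition~\ref{prop:optimal mech}, for $\theta\in[0,1-b]$,
\[
\mu^*(\theta)=\theta+b-(b-a_0)e^{\theta/b},
\qquad
\mu^{*\prime}(\theta)=1-\tfrac{b-a_0}{b}e^{\theta/b}.
\]
Using $b-a_0=\tfrac b2e^{1-1/b}$, this derivative decreases from $1-\tfrac12e^{1-1/b}$ at $\theta=0$ to exactly $1/2$ at $\theta=1-b$. It is zero on $(1-b,1]$. Write $x=\theta+b\in[b,1+b]$. We must then solve
\[
G(x)-H(x)+4(x-a_0)G'(2x-a_0)=m(x),
\qquad
m(x):=\mu^{*\prime}(x-b)\mathbf 1\{x<1\},
\]
with $H\le G$, $G(\infty)=H(\infty)=q\le1$, $G$ continuous, and $G$ putting no mass below $2b-a_0$.

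\textbf{First attempt: degenerate intervals.} I would try $\bar a=\underline a$, so $G=H$ and all responsiveness comes from switches across the gap. The equation then reduces to
\[
dG(2x-a_0)=\frac{m(x)}{2(x-a_0)}\,dx
\quad\text{on } (b,1),
\]
so floors have a density on $[2b-a_0,2-a_0]$ and no mass above $2-a_0$. In the pooling region $x\in(1,1+b)$, the switching points $2x-a_0$ lie above $2-a_0$ and carry no mass, and $G-H\equiv0$, so $\mu'=0$ there as required.

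\textbf{Main obstacle: total mass.} The crux is checking $q=\int_b^1 \frac{m(x)}{2(x-a_0)}\,dx\le 1$. Since $m\le1$, a crude bound is $q\le\frac12\ln\frac{1-a_0}{b-a_0}$, which can exceed $1$ for small $b$. So the bound must use the exact $m$, or the construction must be modified.

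If the pure-switch construction overshoots, I would move part of the responsiveness into the within-interval term $G(x)-H(x)$. The idea is to offer wide intervals with floors near $2b-a_0$ and ceilings spread over $[b,1]$ after the switch is triggered. Concretely, I would let a fraction of the intervals have a low floor, so that the switch happens early, and a random ceiling with $dH$ chosen so that $G(x)-H(x)$ supplies the part of $m(x)$ not supplied by switches. I would then verify $H\le G$ and that all ceilings lie in $[\cdot,1]$, so that $G=H$ beyond $x=1$ and pooling is preserved.

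Choosing this split so that $q\le1$ and the ordering holds is where I expect the real work to be. I would first try the split in which switching supplies exactly the amount $\tfrac12\mathbf 1\{x<1\}$ at the pooling boundary and intervals supply the remainder, and check the resulting mass numerically in $b$ before proving it in general.

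\textbf{Variance and conclusion.} With the mean matched, $\mu(0)=a_0$ holds because the agent at $\theta=0$ strictly prefers $a_0$ to anything in the realized set: actions below $a_0$ are farther from his ideal, and actions $\ge 2b-a_0$ are weakly farther, with ties broken toward the lower action. Integrating $\mu'=\mu^{*\prime}$ then gives $\mu=\mu^*$. The identity $\frac{d}{d\theta}\E[a_D^2]=2(\theta+b)\mu'$ gives $\tau'=2(\theta+b-\mu)\mu'$. Since this matches the derivative of $\tau^*$ and $\tau(0)=0=\tau^*(0)$, we get $\tau=\tau^*$; absolute continuity of $\E[a_D^2]$ holds because $G$ and $H$ are chosen absolutely continuous.

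Because statewise regret depends only on the mean and variance, $\sup_\theta R(\Psi^*,\theta)=V^*$. Together with $V^*\le V_D$, this gives $V_D=V^*$. The saddle-point inequality for $(\Psi^*,F^*)$ is inherited from Proposition~\ref{prop:optimal mech}, since every delegation rule is an incentive-compatible mechanism.
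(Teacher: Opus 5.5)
\textbf{There is a genuine gap.} Your reduction to endpoint distributions $(q,G,H)$ is sound, as are the mean-then-variance step and the saddle-point step; all follow the paper. But you stop exactly where the content of the proposition lies. You never exhibit $H\le G$ with common mass $q\le 1$ that solves the mean equation, and both candidates you name fail.

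\emph{The pure-switch rule.} It is not merely unverified; it is infeasible for small $b$. Since $m\ge\tfrac12$ on $(b,1)$, its floor mass is at least $\tfrac14\ln\frac{1-a_0}{b-a_0}$. Because $b-a_0=\tfrac b2e^{1-1/b}$, this bound is of order $1/(4b)$; it is already about $1.5$ at $b=0.2$. Using the exact $m$ cannot rescue it.

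\emph{The split with switches supplying $\tfrac12$.} This needs floor mass exactly $\tfrac14\ln\frac{1-a_0}{b-a_0}$, so it fails for small $b$ as well. It also fails locally. For $b<x<\min\{2b-a_0,1\}$, no floor lies below $x$, so $G(x)=0$ and hence $G(x)-H(x)=0<m(x)-\tfrac12$. Near the bottom, switches must carry all of $m$.

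\emph{The missing idea.} Make within-interval tracking as large as possible, using $G(x)-H(x)\le G(x)$. Place no ceilings inside the responsive region, i.e.\ $H=0$ below a cutoff. Then every agent who has crossed the gap keeps tracking $\theta+b$, and switches need only supply $\lambda(x)-G(x)$, where $\lambda=m$ on $(b,1)$. The mean equation becomes the delay equation $G'(\ell)=[\lambda(T(\ell))-G(T(\ell))]/(2(\ell-a_0))$, with $T(\ell)=(a_0+\ell)/2$. Start it at $2b-a_0$ and stop at the first $\ell^*$ with $G(T(\ell^*))=\lambda(T(\ell^*))$ (or at $2-a_0$), and set $c=T(\ell^*)$. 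Afterwards take $H=G-\lambda$ on $(c,1)$, which is nondecreasing because $\lambda$ decreases, and $H=G$ from $1$ on.

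Even with this construction, $q<1$ needs a genuine argument. The paper shows $G<\lambda(b)<1$ through an integral identity for $W(\ell)=(\ell-a_0)(\lambda(b)-G(\ell))$. Its right side is a positive constant plus nonnegative terms, so $W$ can never vanish.

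\emph{A smaller point.} This $H$ has an atom at $1$. So instead of relying on absolute continuity of $G$ and $H$, obtain the variance from the integrated identity \eqref{eq:second-moment-envelope}. That identity holds for any incentive-compatible mechanism once $\mu=\mu^*$ and $s(0)=a_0^2$.
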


The rule preserves both players' expected payoffs at every state and therefore attains the unrestricted robust guarantee. It need not reproduce the particular lotteries of every optimal direct mechanism; the relevant requirement is to reproduce their common mean and variance. Appendix~\ref{subsec:payoff-implementation} explains this distinction.

%%%%%%%%%%%%%%%%%%%%%%%%%%%%%%%%%%%%%%%%%%%%%%%%%%%%%%%%%%%%%%%%%%%%%%%%%%%%%%%%%%%%%%%%%%%

\section{Conclusion}
\label{sec:conclusion}

This paper characterizes robust delegation when the principal knows the support of the state but has no prior over it. Deterministic delegation forces an all-or-nothing choice between full discretion and a fixed decision, whereas random delegation strictly improves for every interior degree of preference misalignment. The optimal rule separately governs whether the decision leaves a common default and how much discretion follows, using a gap and a higher interval with random boundaries. The mechanism-design relaxation explains this structure: incentive compatibility makes responsiveness require action dispersion, and random delegation implements the optimal mean and variance.

%\clearpage

%\clearpage
\appendix
\section{Appendix: Details of the Optimal Stochastic Mechanism}
\label{app:mechanism-details}

\subsection{The deterministic benchmark}

\begin{proof}[Proof of Proposition~\ref{prop:detvalue}]
Fix a deterministic delegation set $D$ and its induced action rule $a_D$. If $a_D$ is constant, say $a_D(\theta)=c$, then
\[
\sup_{\theta\in[0,1]}(c-\theta)^2
=\max\{c^2,(c-1)^2\}\geq\frac14.
\]
Suppose instead that $a_D$ is nonconstant. If $D$ contains some $x\in[b,1+b]$, then at state $\theta=x-b$ the agent chooses his ideal action $x$, and the principal's regret is $b^2$. If $D$ does not intersect $[b,1+b]$, nonconstancy requires the nearest point in $D$ to switch from a lower action to a higher action as the agent's ideal point ranges over $[b,1+b]$. Immediately above such a switching point, the selected action lies strictly more than $b$ above the principal's ideal action. Hence the supremum of regret is again at least $b^2$. Every deterministic rule therefore has worst-case regret at least $\min\{b^2,1/4\}$.

The bound is attainable. Under $D=[0,1]$, the induced action is $\min\{\theta+b,1\}$ and its regret is at most $b^2$. Under $D=\{1/2\}$, worst-case regret is $1/4$. Choosing the better of these two rules proves the result.
\end{proof}

\subsection{Incentive compatibility}

\begin{proof}[Proof of Lemma~\ref{lem:ICmoments}]
For a mechanism with mean $\mu$ and variance $\tau$, let $s=\mu^2+\tau$ denote the second moment. The agent's payoff from reporting $\widehat\theta$ is
\[
U(\widehat\theta\mid\theta)
=-s(\widehat\theta)+2(\theta+b)\mu(\widehat\theta)-(\theta+b)^2.
\]
Adding the incentive constraints for two states shows that $\mu$ is nondecreasing. The usual envelope argument then yields~\eqref{eq:IC-envelope}. Conversely, if $\mu$ is nondecreasing and the envelope identity holds, the difference between truthful utility and the utility from report $\widehat\theta$ is
\[
U(\theta)-U(\widehat\theta\mid\theta)
=2\int_{\widehat\theta}^{\theta}\bigl(\mu(x)-\mu(\widehat\theta)\bigr)\,\dd x
\geq0.
\]
Any nonnegative variance can be realized, for example, by placing equal probabilities on $\mu(\theta)-\sqrt{\tau(\theta)}$ and $\mu(\theta)+\sqrt{\tau(\theta)}$. This proves Lemma~\ref{lem:ICmoments}.
\end{proof}

Rearranging the envelope condition gives the identity
\begin{equation}
\label{eq:second-moment-envelope}
s(\theta)=s(0)+2(\theta+b)\mu(\theta)-2b\mu(0)
-2\int_0^\theta\mu(x)\,\dd x.
\end{equation}
It holds without differentiability assumptions. In particular, for a fixed mean schedule the second moment, and hence the variance, is determined up to its initial value. The regret identity is
\begin{equation}
\label{eq:regret-envelope-integrated}
R(M,\theta)=s(0)+2b\bigl(\mu(\theta)-\mu(0)\bigr)
-2\int_0^\theta\mu(x)\,\dd x+\theta^2.
\end{equation}
For absolutely continuous moments, differentiating these identities yields the variance identity in Section~\ref{subsec:stochastic-mechanisms} and the regret identity~\eqref{eq:regret-derivative}.

\subsection{The candidate and its worst-case distribution}

\begin{proof}[Proof of Proposition~\ref{prop:optimal mech}: existence and optimality]
Let $\alpha=e^{1-1/b}$, $C=1-\alpha/2$, and $a_0=bC$. The moments in Proposition~\ref{prop:optimal mech} are
\begin{equation}
\label{eq:optimal-mean}
\mu^*(\theta)=
\begin{cases}
\displaystyle \theta+b-\frac b2e^{(\theta+b-1)/b},&0\leq\theta\leq1-b,\\[2mm]
\displaystyle 1-\frac b2,&1-b\leq\theta\leq1,
\end{cases}
\end{equation}
\begin{equation}
\label{eq:optimal-variance}
\tau^*(\theta)=
\begin{cases}
\displaystyle V^*-\left(b-\frac b2e^{(\theta+b-1)/b}\right)^2,
&0\leq\theta\leq1-b,\\[2mm]
\displaystyle V^*-\frac{b^2}{4},&1-b\leq\theta\leq1.
\end{cases}
\end{equation}
The mean is continuous and nondecreasing, the variance is nonnegative, and the two schedules satisfy~\eqref{eq:IC-envelope}. They are therefore feasible and incentive compatible. Their second moment and regret are
\begin{equation}
\label{eq:optimal-second-moment}
s^*(\theta)=
\begin{cases}
V^*+2\theta\mu^*(\theta)-\theta^2,&0\leq\theta\leq1-b,\\
V^*+1-b,&1-b\leq\theta\leq1,
\end{cases}
\end{equation}
\begin{equation}
\label{eq:optimal-regret}
R(M^*,\theta)=
\begin{cases}
V^*,&0\leq\theta\leq1-b,\\
V^*-(\theta-(1-b))(1-\theta),&1-b\leq\theta\leq1.
\end{cases}
\end{equation}

Define
\begin{equation}
\label{eq:least-favorable-distribution}
F^*=\frac\alpha2\delta_0
+\frac Cb e^{-\theta/b}\mathbf{1}_{(0,1-b)}(\theta)\,\dd\theta
+\alpha C\delta_1.
\end{equation}
The masses sum to one because the continuous part has mass $C(1-\alpha)$. For $0<x<1-b$, its survival probability is $S(x)=F^*((x,1])=Ce^{-x/b}$, so $S(x)=bf^*(x)$. For $1-b<x<1$, $S(x)=\alpha C$.

Integrating~\eqref{eq:regret-envelope-integrated} against $F^*$, the terms involving $\mu(x)$ on $(0,1-b)$ cancel. The result is
\[
R(M,F^*)=s(0)-2bC\mu(0)
+2\alpha C\int_{1-b}^1\bigl(\mu(1)-\mu(x)\bigr)\,\dd x
+\E_{F^*}[\theta^2].
\]
Integration of the survival function gives
\[
\E_{F^*}[\theta^2]
=2C\int_0^{1-b}xe^{-x/b}\,\dd x
+2\alpha C\int_{1-b}^1x\,\dd x
=2b^2C^2=2V^*.
\]
Completing the square yields~\eqref{eq:saddle-certificate}. This proves optimality of the candidate among all incentive-compatible stochastic mechanisms.

Since $0<\alpha<1$, we have $b/2<a_0<b$. Also, setting $x=1/b-1>0$ and using $1-e^{-x}<x$ gives $2b-be^{1-1/b}<1$, hence $a_0<1/2$. Therefore $V^*<\min\{b^2,1/4\}$.
\end{proof}

\subsection{Uniqueness of the moments}

\begin{proof}[Proof of Proposition~\ref{prop:optimal mech}: uniqueness]
Let $M$ attain worst-case regret $V^*$. The saddle-point inequalities imply $R(M,F^*)=V^*$. Equality in~\eqref{eq:saddle-certificate} gives
\[
\mu(0)=a_0,\qquad \tau(0)=0,
\qquad
\int_{1-b}^1\bigl(\mu(1)-\mu(x)\bigr)\,\dd x=0.
\]
Moreover, $R(M,\theta)=V^*$ for $F^*$-almost every state. On $(0,1-b)$, the positive density and~\eqref{eq:regret-envelope-integrated} imply, for Lebesgue-almost every $\theta$,
\[
2b\bigl(\mu(\theta)-a_0\bigr)-2\int_0^\theta\mu(x)\,\dd x+\theta^2=0.
\]
The right side expresses $\mu$ almost everywhere as an absolutely continuous function. Its derivative satisfies $b\mu'=\mu-\theta$, and its limiting value at zero is $a_0$. The unique solution is the first branch of~\eqref{eq:optimal-mean}. Because $\mu$ is nondecreasing and this solution is continuous, agreement almost everywhere implies agreement at every interior state in $(0,1-b)$.

Equality in the upper-region integral implies $\mu(\theta)=\mu(1)=:m$ for every $\theta\in(1-b,1]$. Monotonicity gives $m\geq\mu^*(1-b)$. Taking the right limit of~\eqref{eq:regret-envelope-integrated} at $1-b$ gives
\[
R\bigl(M,(1-b)^+\bigr)=V^*+2b\bigl(m-\mu^*(1-b)\bigr).
\]
The worst-case bound therefore forces $m=\mu^*(1-b)=1-b/2$. Monotonicity also pins down the value at $1-b$ itself. Finally,~\eqref{eq:second-moment-envelope} and $s(0)=a_0^2$ give $s=s^*$ and $\tau=\tau^*$ at every state.
\end{proof}

\begin{cro}
\label{cor:randomizationnecessary}
Every optimal stochastic mechanism induces a nondegenerate action lottery at every state $\theta>0$ and the deterministic action $a_0$ at $\theta=0$.
\end{cro}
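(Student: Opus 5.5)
The plan is to read the corollary off the uniqueness part of Proposition~\ref{prop:optimal mech}. Every optimal mechanism induces the same moments $(\mu^*,\tau^*)$, and a lottery with finite second moment is degenerate if and only if its variance is zero. So it suffices to check two facts: $\tau^*(0)=0$ with $\mu^*(0)=a_0$, and $\tau^*(\theta)>0$ for every $\theta\in(0,1]$.

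The statement at $\theta=0$ is item~1 of the proposition. Zero variance forces $M_0=\delta_{a_0}$.

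For $\theta>0$ I will use the closed form~\eqref{eq:optimal-variance} together with $V^*=a_0^2=b^2C^2$, where $C=1-\alpha/2$ and $\alpha=e^{1-1/b}$.

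\emph{Responsive region.} On $(0,1-b]$ we have $\tau^*(\theta)=V^*-(b-\tfrac b2 e^{(\theta+b-1)/b})^2$. The subtracted term is $(\mu^*(\theta)-\theta)^2$. The map $\theta\mapsto b-\tfrac b2e^{(\theta+b-1)/b}$ is strictly decreasing. It equals $bC=a_0$ at $\theta=0$ and $b/2>0$ at $\theta=1-b$, so it stays positive on the whole interval. Hence the subtracted term is strictly below $a_0^2$ for $\theta>0$, which gives $\tau^*(\theta)>0$ there. Equivalently, one can invoke item~2 directly: $\tau^*$ is strictly increasing on $(0,1-b)$ and starts from $\tau^*(0)=0$.

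\emph{Pooling region.} On $[1-b,1]$ we have $\tau^*=V^*-b^2/4=b^2(C^2-\tfrac14)$. This is positive because $C>1/2$, which holds since $\alpha<1$ for $b\in(0,1)$; this is the bound $a_0>b/2$ already noted in the existence proof. The value also agrees with the limit of the responsive branch at $1-b$, so there is no gap at the junction.

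Since every optimal mechanism has variance $\tau^*(\theta)>0$ at each $\theta>0$, its lottery $M_\theta$ cannot be a point mass, and the corollary follows. There is no real obstacle. The only care needed is to make sure the uniqueness statement pins down $\tau$ at \emph{every} state, not just almost every state. The uniqueness proof does this: it obtains $s=s^*$ pointwise from~\eqref{eq:second-moment-envelope}, after fixing $\mu$ at all states using monotonicity and continuity.
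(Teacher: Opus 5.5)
Your proposal is correct and matches the paper's proof: both read the corollary off the uniqueness of the moments in Proposition~\ref{prop:optimal mech}. Both use $\tau^*(0)=0$, obtain $\tau^*>0$ on $(0,1-b]$ because the upward distortion falls strictly from $a_0$ to $b/2>0$, and obtain $\tau^*=a_0^2-b^2/4>0$ on the pooling region from $a_0>b/2$. Your check that the uniqueness argument fixes $\tau$ at every state, not just almost every state, makes explicit a point the paper's proof uses only implicitly.
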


\begin{proof}
Proposition~\ref{prop:optimal mech} gives $\tau^*(0)=0$. For $0<\theta\leq1-b$, the upward distortion in~\eqref{eq:optimal-variance} is strictly smaller than its value $a_0$ at zero, so $\tau^*(\theta)>0$. For $\theta\geq1-b$, $\tau^*(\theta)=a_0^2-b^2/4>0$ because $a_0>b/2$. Since every optimal mechanism has these moments, its action lottery is degenerate exactly at $\theta=0$.
\end{proof}

\section{Appendix: Construction of the Random Delegation Rule}
\label{app:endpoint-construction}

We construct the endpoint distributions used in Section~\ref{subsec:implementation-random-delegation}. Use $t=\theta+b$ for the agent's ideal action and write
\[
\lambda(t)=1-\frac12e^{(t-1)/b},\qquad b\leq t\leq1.
\]
Thus $\lambda(t)=\mu^{*\prime}(t-b)$ for $b<t<1$. Recall that $\ell_0=2b-a_0$.

\subsection{Endpoint distributions}

Set $G(\ell)=0$ for $\ell\leq\ell_0$. Starting at $\ell_0$, solve the delay equation
\begin{equation}
\label{eq:floor-construction}
G'(\ell)=
\frac{\lambda\!\left((a_0+\ell)/2\right)-G\!\left((a_0+\ell)/2\right)}{2(\ell-a_0)}.
\end{equation}
The argument $(a_0+\ell)/2$ is strictly below $\ell$, and the denominator is bounded away from zero on the construction interval. The equation therefore determines $G$ successively from already specified values. Stop at the first point $\ell^*$ where the numerator is zero, if this occurs before $2-a_0$; otherwise, set $\ell^*=2-a_0$. Extend $G$ constantly above $\ell^*$. Define
\[
c=\frac{a_0+\ell^*}{2}\leq1,
\qquad q=G(\ell^*).
\]
By construction, $G$ is continuous and nondecreasing. Its density is positive before the stopping point. If $c<1$, then $G(c)=\lambda(c)$.

Set
\begin{equation}
\label{eq:ceiling-construction}
H(u)=
\begin{cases}
0,&u<c,\\[1mm]
G(u)-\lambda(u),&c\leq u<1,\\[1mm]
G(u),&u\geq1.
\end{cases}
\end{equation}
When $c=1$, the middle region is empty. When $c<1$, $H(c)=0$, and $H$ increases on $(c,1)$ because $G$ is nondecreasing while $\lambda$ is decreasing. At $u=1$, $H$ has an upward jump of $1/2$ if $c<1$, and of $G(1)$ if $c=1$. Above one it follows $G$. Thus $H$ is nondecreasing, $0\leq H\leq G$, and both functions have total mass $q$.

To verify $q<1$, let $\lambda_0=\lambda(b)<1$, $T(\ell)=(a_0+\ell)/2$, and
\[
W(\ell)=(\ell-a_0)\bigl(\lambda_0-G(\ell)\bigr).
\]
Equation~\eqref{eq:floor-construction} implies, for $\ell\in[\ell_0,\ell^*]$,
\begin{equation}
\label{eq:endpoint-mass-bound}
\begin{split}
W(\ell)={}&\lambda_0(b-a_0)
+\int_{T(\ell)}^\ell\frac{W(v)}{v-a_0}\,\dd v+\frac12\int_{\ell_0}^\ell\bigl[\lambda_0-\lambda(T(v))\bigr]\,\dd v.
\end{split}
\end{equation}
This identity follows by differentiating both sides and checking equality at $\ell_0$. Initially, $G=0$ on $[b,\ell_0]$, so $W>0$ there. At any putative first zero of $W$, the first term on the right side of~\eqref{eq:endpoint-mass-bound} is strictly positive, the second is positive, and the third is nonnegative. Such a zero is impossible. Hence $G(\ell)<\lambda_0$ throughout the construction interval, and $0<q<\lambda_0<1$.

With probability $1-q$, offer only $\{a_0\}$. Otherwise, draw $Z$ uniformly on $(0,1)$ and set
\[
\underline a=\inf\{\ell:G(\ell)\geq qZ\},
\qquad
\bar a=\inf\{u:H(u)\geq qZ\}.
\]
Since $H\leq G$, this common-quantile construction gives $\underline a\leq \bar a$. The resulting rule $\Psi^*$ is a probability distribution over the desired delegation sets.

\subsection{Matching the moments}

\begin{proof}[Proof of Proposition~\ref{thm:delegation-implementation}]
Every realized set induces $a_0$ at $\theta=0$. For $b<t<c$, $H(t)=0$ and~\eqref{eq:floor-construction} implies
\[
G(t)-H(t)+4(t-a_0)g(2t-a_0)=\lambda(t).
\]
For $c<t<1$, no further switches occur, because $2t-a_0>\ell^*$, while~\eqref{eq:ceiling-construction} gives $G(t)-H(t)=\lambda(t)$. For $t>1$, both terms vanish: $H(t)=G(t)$ and $g(2t-a_0)=0$. Equation~\eqref{eq:endpoint-response} therefore gives the same derivative as $\mu^*$ almost everywhere. The induced mean is absolutely continuous and starts at $a_0$, so it agrees with $\mu^*$ at every state.

The induced mechanism is incentive compatible. Applying~\eqref{eq:second-moment-envelope}, the matched mean and initial second moment $a_0^2$ imply that its second moment equals $s^*$ at every state. Its variance is consequently $\tau^*$. This proves Proposition~\ref{thm:delegation-implementation}, including the saddle-point statement.
\end{proof}

The construction uses only bounded actions:
\[
\underline a\leq\ell^*\leq2-a_0,
\qquad
\bar a\leq\max\{1,\ell^*\}\leq2-a_0.
\]
Thus, every realized delegation set lies in $[a_0,2-a_0]$.
\begin{rmk}[Tie-breaking]
\label{rmk:tie-breaking-saddle}
For any fixed state, a tie between the default and the floor requires $\underline a=2(\theta+b)-a_0$. The constructed $G$ is continuous, so this event has probability zero. The induced moments and regret of $\Psi^*$ are therefore unchanged by any measurable tie-breaking rule. For competing delegation rules, principal-favoring tie-breaking minimizes regret among the agent's best responses. Any alternative rule weakly increases their regret against $F^*$. Thus $(\Psi^*,F^*)$ remains a saddle point under any such tie-breaking convention.
\end{rmk}

\subsection{Proofs of the structural results}

\begin{proof}[Proof of Proposition~\ref{prop:necessary-default-gap}]
Proposition~\ref{thm:delegation-implementation} implies $V_D=V^*$. Hence any optimal delegation rule induces an optimal stochastic mechanism. By Proposition~\ref{prop:optimal mech}, its action at $\theta=0$ has mean $a_0$ and zero variance, so $a_D(0)=a_0$ for almost every realized set $D$. In particular, $a_0\in D$ almost surely. The lowest-state agent's ideal action is $b$. Every action $a\in(a_0,2b-a_0)$ is strictly closer to $b$ than $a_0$ is. If such an action belonged to a realized set, the lowest-state agent would not choose $a_0$, a contradiction. Thus every optimal rule contains the common default and excludes $(a_0,2b-a_0)$ almost surely.
\end{proof}

\begin{proof}[Proof of Theorem~\ref{thm:optimal-delegation}]
The endpoint construction in Appendix~\ref{app:endpoint-construction} defines a probability distribution $\Psi^*$ supported on $\{a_0\}$ and on sets $\{a_0\}\cup[\underline a,\bar a]$ with $\underline a\geq2b-a_0$. Proposition~\ref{thm:delegation-implementation} shows that this rule induces the optimal moments $(\mu^*,\tau^*)$ and attains $V_D=V^*$. It is therefore optimal both among random delegation rules and relative to the larger class of stochastic incentive-compatible mechanisms. Proposition~\ref{prop:necessary-default-gap} establishes the stated common default and gap as necessary features of every optimum.
\end{proof}

\section{Appendix: Payoff Implementation and Related Observations}
\label{subsec:payoff-implementation}

\subsection{Exact implementation and payoff implementation}

A delegation rule exactly implements a direct mechanism if the two induce the same action distribution at every state. It payoff implements the mechanism if both players obtain the same expected utilities at every state. In the present quadratic environment, payoff implementation is equivalent to matching the mean and the second moment. Indeed, the difference between the agent's and principal's expected losses equals $b^2-2b(\mu(\theta)-\theta)$, so the two losses identify the mean and then the second moment.

\begin{rmk}[A binary direct mechanism]
\label{rmk-binary implementation}
The optimal moments admit the binary representation
\[
M^*_\theta=p(\theta)\delta_{\overline a(\theta)}
+\bigl(1-p(\theta)\bigr)\delta_0,
\qquad
\overline a(\theta)=\frac{s^*(\theta)}{\mu^*(\theta)},
\qquad
p(\theta)=\frac{\mu^*(\theta)^2}{s^*(\theta)}.
\]
Since $\mu^*>0$ and $s^*\geq(\mu^*)^2$, these are valid lotteries with moments $(\mu^*,s^*)$. At zero the action is $a_0$ with certainty, and at every positive state zero is chosen with positive probability. No random delegation rule can reproduce these lotteries exactly: the deterministic bottom action requires $a_0$ to belong to almost every realized set, and every state strictly prefers $a_0$ to zero. The rule in Proposition~\ref{thm:delegation-implementation} nevertheless payoff implements this mechanism.
\end{rmk}

Payoff implementation itself need not be possible for an arbitrary incentive-compatible mechanism. Consider the three lotteries
\[
Q_0=\delta_b,\qquad
Q_1=\frac38\delta_b+\frac58\delta_{b+6/5},\qquad
Q_2=\delta_{b+3/2}.
\]
The agent's expected losses from these lotteries are
\[
\ell_0(\theta)=\theta^2,\qquad
\ell_1(\theta)=\theta^2-\frac32\theta+\frac9{10},\qquad
\ell_2(\theta)=\theta^2-3\theta+\frac94.
\]
Selecting $Q_0$ for $\theta\leq3/5$, $Q_1$ for $3/5<\theta\leq9/10$, and $Q_2$ for $\theta>9/10$ is therefore incentive compatible.

Suppose a random delegation rule matches its moments. The deterministic actions at states zero and one require almost every realized set to contain both $b$ and $b+3/2$. For state one to choose $b+3/2$, such a set must exclude $(b+1/2,b+3/2)$. State $7/10$ strictly prefers the available action $b$ to any action at or above $b+3/2$. Its delegated choice must consequently be at most $b+1/2$, but $Q_1$ has mean $b+3/4$, a contradiction. The distinction is that a direct mechanism can offer $b+6/5$ only as part of a lottery, whereas a delegation set would let the highest state select it without taking the associated risk.

\subsection{Comparative statics and the deterministic benchmark}

The worst-case regret is increasing in the bias. In particular,
\[
\frac{\dd a_0(b)}{\dd b}
=1-\frac12e^{1-1/b}\left(1+\frac1b\right)>0
\qquad\text{for }0<b<1.
\]
As $b$ rises, the bottom mean increases while the pooled mean $1-b/2$ falls. The range of means shrinks, and the pooling region $[1-b,1]$ expands. As $b\to1$, the mean converges uniformly to $1/2$ and the variance converges uniformly to zero, because
\[
0\leq\tau^*(\theta;b)\leq V^*(b)-b^2/4\longrightarrow0.
\]
Thus the stochastic mechanism converges to centralization.

The gain from randomization vanishes at both extremes of the bias range. Near zero, full delegation already guarantees regret $b^2$. Near one, the stochastic value approaches the centralized guarantee $1/4$. At $b=1/2$, the example in Section~\ref{sec:deterministic} improves on the deterministic value $1/4$ by attaining $3/16$, whereas the unrestricted optimum improves further to
\[
V^*(1/2)=\frac14\left(1-\frac1{2e}\right)^2.
\]

\bibliography{bibfile}

\end{document}